\newcommand\hmmax{0}
\newcommand\bmmax{0}
\gdef\dash---{\thinspace---\hskip.16667em\relax}
\newcommand{\newmodel}{\textsf{CompHO}}
\newcommand{\seqcomp}[1]{\mathrm{mono}(#1)}
\newcommand{\mytilde}{\kern -.15em\lower .7ex\hbox{\~{}}\kern .04em}
\newcommand{\ignore}[1]{}
\newcommand{\ho}{\ensuremath{\mathit{HO}}\xspace}
\newcommand{\algorw}{{\texttt{make-CompHO}}\xspace}
\renewcommand{\next}{\textit{succ}}
\newcommand{\ph}{\ensuremath{\mathit{ph}}}
\newcommand{\rd}{\ensuremath{\mathit{rd}}}
\newcommand{\aexec}{\ensuremath{\pi}}
\newcommand{\ra}{\ensuremath{\rightarrow}}
\newcommand{\prj}[1]{\hspace{-1.4mm}\downharpoonright_{#1}}
\newcommand{\asyncexec}[1]{\ensuremath{\textsf{ae}^{#1}}}
\newcommand{\syncexec}[1]{\ensuremath{\textsf{se}^{#1}}}
\newcommand{\tuple}[1]{\left<{#1}\right>}
\newcommand{\prog}{\ensuremath{\mathcal{P}}\xspace}
\newcommand{\nothing}{\ensuremath{\ast}\xspace}
\newcommand{\init}{\text{\texttt{init}}\xspace}
\newcommand{\send}{\text{\texttt{send}}\xspace}
\newcommand{\phase}{\text{\texttt{phase}}\xspace}
\newcommand{\update}{\text{\texttt{update}}\xspace}
\renewcommand{\path}{\ensuremath{\mathtt{\pi}}\xspace}
\newcommand{\pc}{\ensuremath{\mathsf{pc}}}
\newcommand{\loc}{\ensuremath{\mathtt{Loc}}\xspace}
\newcommand{\vars}{\ensuremath{\mathtt{Vars}}\xspace}
\newcommand{\svars}{\ensuremath{\mathtt{SyncV}}\xspace}
\newcommand{\tags}{\ensuremath{\mathtt{tags}}\xspace}
\newcommand{\tagm}{\ensuremath{\mathtt{tagm}}\xspace}
\newcommand{\dompaytype}{\ensuremath{\mathcal{T}}\xspace}
\newcommand{\paytype}{\ensuremath{\mathbb{MT}}\xspace}
\newcommand{\pidtype}{\ensuremath{\mathsf{Pid}}\xspace}
\newcommand{\type}{\ensuremath{\mathbb{T}}\xspace}
\newcommand{\datatype}{\ensuremath{\mathcal{D}}\xspace}
\newcommand{\limp}{\Rightarrow} % set of places
\newcommand{\sem}{[\![}
\newcommand{\antic}{]\!]}
\lstdefinelanguage{scala}{
  morekeywords={abstract,case,catch,class,def,%
    do,else,extends,false,final,finally,%
    for,if,implicit,import,match,mixin,%
    new,null,object,override,package,%
    private,protected,requires,return,sealed,%
    super,this,throw,trait,true,try,%
    type,val,var,while,with,yield,%
    Round,ProcessID,Int,send,update,init,old,process,Boolean,Set,Map,variable,interface,receive},
  otherkeywords={=>,<-,<\%,<:,>:,\#,@},
  sensitive=true,
  morecomment=[l]{//},
  morecomment=[n]{/*}{*/},
  morestring=[b]",
  morestring=[b]',
  morestring=[b]"""
}
\lstdefinelanguage{ho}{
  morekeywords={abstract,case,catch,class,def,%
    typedef, struct,%
    int, bool,%
    do,if, else,extends,false,final,finally, continue,break,%
    for,implicit,import,match,mixin,%
    new,null,object,override,package,%
    private,protected,requires,return,sealed,%
    super,this,throw,trait,true,try,%
    type,val,var,while,with,yield,%
    Round,ProcessID,Int,send,update,init,old,process,Boolean,Set,Map,variable,interface,receive, round, UPDATE, SEND, out_internal,exit, HOmachine, in,out},
  otherkeywords={=>,<-,<\%,<:,>:,\#,@},
  sensitive=true,
  morecomment=[l]{//},
  morecomment=[n]{/*}{*/},
  morestring=[b]",
  morestring=[b]',
  morestring=[b]"""
}
\definecolor{dkgreen}{rgb}{0,0.6,0}
\definecolor{gray}{rgb}{0.5,0.5,0.5}
\definecolor{mauve}{rgb}{0.58,0,0.82}
\tiny\color{gray},
\def\ContinueLineNumber{\lstset{firstnumber=last}}
\newcommand{\citet}[1]{\cite{#1}}
\begin{document}

\title{Communication-closed asynchronous protocols}
%
%\titlerunning{Abbreviated paper title}
% If the paper title is too long for the running head, you can set
% an abbreviated paper title here
%
\author{Andrei Damian\inst{3} \and
Cezara Dr\u{a}goi\inst{1}\and Alexandru Militaru
\inst{3}\and Josef Widder\inst{2}}
%
%\authorrunning{F. Author et al.}
% First names are abbreviated in the running head.
% If there are more than two authors, 'et al.' is used.
%
\institute{INRIA, ENS, CNRS, PSL \and
TU Wien \and Politehnica University Bucharest}
\maketitle              % typeset the header of the contribution
\begin{abstract}
Fault-tolerant distributed systems are implemented over asynchronous
     networks, so that they use algorithms for  asynchronous models
     with faults.
Due to asynchronous communication and the occurrence of faults (e.g.,
     process crashes or the network dropping messages)  the
     implementations are hard to understand and analyze.
In contrast, synchronous computation models simplify design and
     reasoning.
In this paper, we bridge the gap between these two worlds.
For a class of asynchronous protocols, we introduce a procedure that,
     given an asynchronous protocol, soundly computes its round-based
     synchronous counterpart.
This class is defined by properties of the sequential code.

We computed the synchronous counterpart of known consensus and leader
     election protocols, such as, Paxos, and Chandra and Toueg's
     consensus.
Using Verifast we checked the sequential properties required by the
rewriting.
We verified the round-based synchronous counter-part of Multi-Paxos, and other algorithms, using
existing deductive verification methods for synchronous protocols.
%\keywords{Asynchronous protocols; Synchronous protocols; Reduction
%  techniques; Verification.}
\end{abstract}

%
%
%
% !TEX root = head2.tex
\section{Introduction}\label{sec:intro}

\newcommand{\impcite}{\cite{Chandra2007PML,OngaroO14,HawblitzelHKLPR17,EPaxos,RenesseSS15}}

%\paragraph{Motivation}

Fault-tolerant distributed systems provide a dependable service on top
     of unreliable computers and networks.
They implement fault tolerance protocols that replicate the system and ensure that
      from the outside all (unreliable) replicas are perceived  as a
     single reliable one.
% This is done by means of fault tolerance protocols that ensure that
%       from the outside the unreliable replicas are perceived  as a
%      single reliable one.
This has been formalized by strong
     consistency, consensus, state machine replication. 
These protocols are crucial parts of many distributed systems and their
     correctness is very hard to obtain.
Protocol designers are faced with the challenges of buffered message
     queues, message re-ordering at the network, message loss,
     asynchrony and concurrency,
     and process faults.
Reasoning about all these features is a notoriously hard as
     discussed in several research papers~\impcite, and as a
     consequence testing tools like Jepsen~\cite{Jepsen} found conceptual design
     bugs in deployed implementations.
     %  many bugs in implementations are conceptual, as shown
     % by the testing tool Jepsen~\cite{Jepsen} that found many design
     % bugs when testing the implementations.

\newcommand{\ITP}{\cite{WoosWATEA16,LesaniBC16,RahliGBC15,HawblitzelHKLPR17,PadonMPSS16,PadonLSS17,atomicGotsman}}

\newcommand{\IMPL}{\cite{Zab,M12zoo,OngaroO14,EPaxos,Kotla2010}}

\paragraph{Problem statement.}
A programming abstraction of synchronous
     rounds \cite{DLS88:jacm,Lyn96:book,Charron-BostS09}  would
     relieve the designer from many of these difficulties.
Synchronous round-based algorithms are more structured, are easier to
     understand, have simpler behaviors.
As one only has to reason about specific global states at the round
     boundaries, they entail simpler correctness  arguments.
However, it is also well-understood that  synchronous distributed systems,
     are often ``impossible or inefficient to
     implement''~\cite[p.~5]{Lyn96:book}.
Hence, designers turn to the asynchronous model, in which the
     performance emerges~\cite{Lann03} from the current load of a
     system, which in normal operation has significant better
     performance.
Thus, no synchronous algorithm is used in any real large scale system
     we are aware of.

In face of the different advantages of synchronous and asynchronous
     models, the question is how to connect these two worlds.
We consider the question, given an asynchronously
     algorithm, does it have a synchronous canonic counterpart?
%This problem is less studied, and the expressiveness power of
%     round-based synchronous models is an open question.

\paragraph{Key Challenges.}
The main difficulty stems from the fundamental discrepancy in the
control structure, that is,  \emph{(i)}
interleavings and \emph{(ii)} message buffers:

\noindent
\emph{(i)
Interleavings:} In  synchronous round-based models, computation
     is structured in rounds that  are executed by all process in lock-step.
There is no interleaving between steps, the beginning and the end of
each step is synchronized across processes, by definition.
In the asynchronous computational model executions are much less
     structured.
Processes are scheduled according to interleaving semantics.
This leads to an exponential number of intermediate
     global states (exponential in the number of steps)
     vs.\ a linear one in the synchronous case.

     \smallskip

\noindent
\emph{(ii) Message buffers:} In the synchronous model, messages are
     received in the same round they are sent.
Thus, the number of messages that are in-transit is bounded at all
     times, and depends on the number of processes.
In the asynchronous model, fast processes may
     generate messages quicker than slow processes may process them.
Thus communication needs to be buffered, and the buffer size is
     unbounded.
The number of messages that are in a buffer depends on the
     number of processes, but also on the number of send instructions
     executed by each process.
Moreover, the network may reorder messages, that is,
      a process may receive a new message before all older ones,
     that are still in-transit.

    \smallskip

Due to the discrepancy, there is no obvious reason why an
     asynchronous algorithm should have a synchronous
     canonic form.
In general there is none.
We characterize asynchronous systems that allow to
     dissolve this discrepancy.

     \paragraph{Key Insights.}
As not all conceivable asynchronous protocols can be rewritten into
     synchronous ones, we focus on characteristics of practical
     distributed systems.
From a high-level viewpoint, distributed systems are about
     coordination in the absence of a global clock.
Thus, distributed algorithms implement an abstract notion of time to
     coordinate.
This notion of time may be implicit.
However,  the local state of a process maintains  this abstract time
     notion, and a process timestamps the messages it sends
     accordingly.
Synchronous algorithms do not need to implement an abstract notion of
     time, as it is present from the beginning: the round number plays
     this role and it is embedded in the definition of any synchronous
     computational model.
The key insight of our results is the existence of a correspondence
     between values of the abstract clock in the asynchronous systems
     and round numbers in the synchronous ones.
Using this correspondence, we  make explicit the
     ``hidden'' round-based synchronous structure of an asynchronous
     algorithm.
More systematically, in an asynchronous system:
\begin{itemize}
\item abstract time is encoded in local variables.
     Modifications of their values mark the progress
     of abstract time, and\dash---making the correspondence to
     round-based algorithms\dash---the local beginning/ending of
     a round;

\item a global round consists of all the steps processes execute
  for the same value of the abstract clock;

\item messages are timestamped with the value of the abstract time of the
     sender, when the message was sent; a receiver can read the
     abstract time at which the  message was sent, and compare
     it with its own abstract time;

\item in order to have a faithful correspondence to
  round-based semantics, we consider \emph{communication-closed protocols}:
  the reception of a message is effectful only if its timestamp is
  equal to or greater than the local time of the receiver. In other
  words, stale messages are discarded.

\end{itemize}

Based on these insights (i) we characterize asynchronous protocols
     whose executions can be reduced to well-formed canonic
     executions, (ii) we define a computational model \newmodel\ whose
     executions are all canonic by definition, (iii) we show how to
     translate an asynchronous protocol into code for the \newmodel\
     framework, and (iv) we show the benefits of this computed canonic
     form for design, validation, and verification of distributed
     systems.
We discuss these four points by using a running example in the
     following section.

%     \cd{read up to here}

\section{Our approach at a glance}
\label{sec:glance}
% !TEX root = head2.tex

\begin{figure}[t]
\begin{minipage}{0.49\textwidth}
\begin{lstlisting}[frame=none, language=C]
typedef struct Msg {  int lab; int ballot; int sender;} msg;
typedef struct List{  msg *message;	struct List * next; int size;} list; <@\label{ln:listtype}@>
enum labels {NewBallot, AckBallot} ;
int coord(); <@\label{ln:leader}@>
bool filter(msg *m,int b,enum labels l){ <@\label{ln:filter}@>
  if (m!=0 && l==NewBallot)  return (m->ballot>=b&& m->label==l);<@\label{ln:filterge}@>
  if (m!=0 && l==AckBallot) return (m->ballot==b  && m->label==l);
  return false; }
bool all_same(list *mbox){
list  *x=mbox;
if (x!=NULL) val = x->message->sender;
while(x!=NULL) { if(x->message->sender != val) return false;  else x= x->next;}
return true;}
int main(){
int  me=getId(); int n;
struct arraylist *log_epoch, *log_leader;
int ballot = 0;  enum labels label;
list *mbox = NULL; list_create(log_epoch);
list_create(log_leader);
while(true){
 label = NewBallot; <@\label{ln:origstartNewBallot}@>
 if (coord() == me){//LEADER'S CODE <@\label{ln:checkleader}@>
   ballot++;
   //@ assert  tag_leq(oldballot, oldlabel, ballot, label);<@\label{ln:tagleq}@>
   msg *m = create_msg(ballot,label,me);
   //@ assert tag_eq(m->ballot,m->label, ballot,label);<@\label{ln:sendassert}@>
   send(m,*);  <@\label{ln:sendelectleader}@> //send new ballot to all
   reset_timeout();
   while(true ){ <@\label{ln:startrcv}@> (m,p) = recv(); <@\label{ln:rcvelect}@>
    if (filter(m,ballot,label)) add(mbox, m); <@\label{ln:filterrcvelect}@>
    if((mbox!=0 && mbox->size==1 )||timeout())break;} <@\label{ln:endrcv}@>
//@ assert  mbox_geq(ballot,label,mbox); <@\label{ln:msgtag}\label{ln:recvassert} @>
   if (mbox!=0 && mbox->size ==1) {<@\label{ln:context}@>
        ballot = mbox->message->ballot;  <@\label{ln:setballot}@>
     //@assert max_tag(ballot,label,mbox);
       leader = mbox->message->sender;<@\label{ln:setleader}@>
     \end{lstlisting}
	\end{minipage}
\begin{minipage}{0.49\textwidth}
  \ContinueLineNumber
\begin{lstlisting}[frame=none, language=C]
dispose(mbox); label = AckBallot; <@\label{ln:origstartAckBallotL}@>
//@ assert  tag_leg(oldballot, oldlabel, ballot,label);
msg *m = create_msg(ballot,label,me);
//@ assert tag_eq(m->ballot, m->label, ballot,label);
send(m,*);  <@\label{ln:sendelect}@> //send ack ballot to all
reset_timeout(); //wait for AckBallot
while  (true){ (m,p) = recv();  <@\label{ln:rcvleader}@>
    if (filter(m,ballot,label)) add(mbox, m);
    if((mbox!=0 && mbox->size>n/2 ) break;
    if(timeout()) break;}<@\label{ln:endrcv2}@>
//@ assert  mbox_tag_eq(ballot,label,mbox); <@\label{ln:listtageq}@>
if (mbox!=0 && mbox->size > n/2&& all_same(mbox)){
    list_add(log_ballot,ballot, true); <@\label{ln:settrue}@>
    list_add(log_leader,ballot, leader); <@\label{ln:elect}@>
    out(ballot, leader);}
}else{  dispose(mbox); }
}else{//FOLLOWER'S CODE
  ballot++; reset_timeout(); //wait Newballot
  while(true ){ <@\label{ln:folstartrcv}@>  (m,p) = recv(); <@\label{ln:rcvelectF}@>
     if (filter(m,ballot,label)) add(mbox, m);
     if((mbox!=0 && mbox->size==1 )||timeout())break;} <@\label{ln:folendrcv}@>
  if (mbox!=0 && mbox->size ==1) {<@\label{ln:contextF}@>
    ballot = mbox->message->ballot; <@\label{ln:folsetballot}@>
    leader = mbox->message->sender; {<@\label{ln:folsetleader}@>
    dispose(mbox); label = Ackballot; <@\label{ln:origstartAckBallotF}@>
    msg *m = create_msg(ballot, label, leader);
    send(m,*);  <@\label{ln:folsendelect}@> //send ack ballot to all
    reset_timeout(); //wait for Ackballot
    while(true){ (m,p) = recv(); <@\label{ln:rcvfollower}@>
      if (filter(m,ballot,label))add(mbox, m);
      if(mbox!=0 && mbox->size>n/2)break;
      if(timeout())break;}<@\label{ln:folendrcv2}@>
     if(!timeout() && all_same(mbox)){
	 list_add(log_ballot,ballot, true); <@\label{ln:folsettrue}@>
	 list_add(log_leader,ballot, leader);
	 out(ballot, leader); }<@\label{ln:folelect}@>
      }else{  dispose(mbox); }
}}} //END FOLLOWER END while END main
\end{lstlisting}
\end{minipage}
\caption{Leader election protocol inspired from Paxos (Phase 1)}
\label{fig:async-leader-election}
\vspace{-3eX}
\end{figure}

The running example in Fig~\ref{fig:async-leader-election} is inspired
     by typical fault-tolerant distributed protocols that often rely
     on the notion of leadership.
For example, primary back-up algorithms use a leader to order the
     client requests and to ensure this order among all replicas.
A leader is a process that is connected via timely links to a majority
     of replicas.
Hence, only in ballots where such a well-connected leader exists, the
     system should try to make progress.
The algorithm in Fig.~\ref{fig:async-leader-election} implements just
     the leader election algorithm.
All processes execute the same code, and $n$ is the number of
     processes.
In each loop iteration each process queries its \texttt{coord} oracle
     in line~\ref{ln:checkleader} to check whether it is a leader
     candidate.
Multiple processes may be candidates in the same iteration.
Depending on the outcome, the code then branches to a leader branch
     and a follower branch.
A candidate process sends to all in line~\ref{ln:sendelectleader}.
Then, the leader branch has the same code as the follower branch, that
     is, waiting for the first message by a candidate in the loop
     starting at lines~\ref{ln:startrcv} and~\ref{ln:folstartrcv}.
Thus, if there are multiple candidates there is a race between them.
If a message from a candidate for a current of future ballot is
     received, processes update their ballot in
     line~\ref{ln:setballot} and~\ref{ln:folsetballot}, and then set
     their leader estimate in the next line.
This estimate is then sent to all in line~\ref{ln:sendelect}
     and~\ref{ln:folsendelect}, and then processes wait to receive
     messages from a majority ($n/2$), for the current ballot.
If all $n/2$  received messages carry the same leader identity, then a
     process knows a leader is elected in the current ballot, and it
     records the leader's identity in line~\ref{ln:elect}
     and~\ref{ln:folelect}.
From a more structural viewpoint, because this protocol is supposed to
     be fault-tolerant, the receive statements in, e.g.,
     line~\ref{ln:rcvelect} and~\ref{ln:rcvleader} are non-blocking
     and may receive \texttt{NULL} if no message is there.
%% (Typical domain-specific languages have receive statements that
%%      implement the semantics of the reception loop from
%%      line~\ref{ln:startrcv} to~\ref{ln:endrcv}.
%% We give the code explicitly here for clarity in presentation.)

\begin{figure}[t]
  \centering
  \vspace{-3eX}
  \includegraphics[scale=0.30]{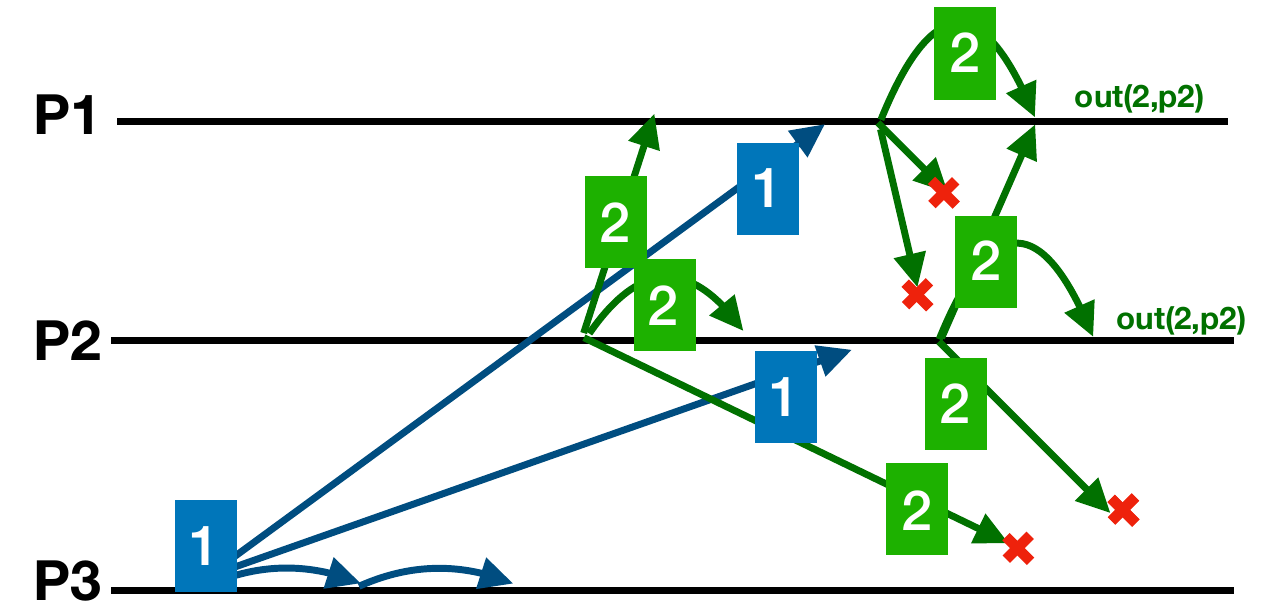}\hspace{1cm}
  \includegraphics[scale=0.30]{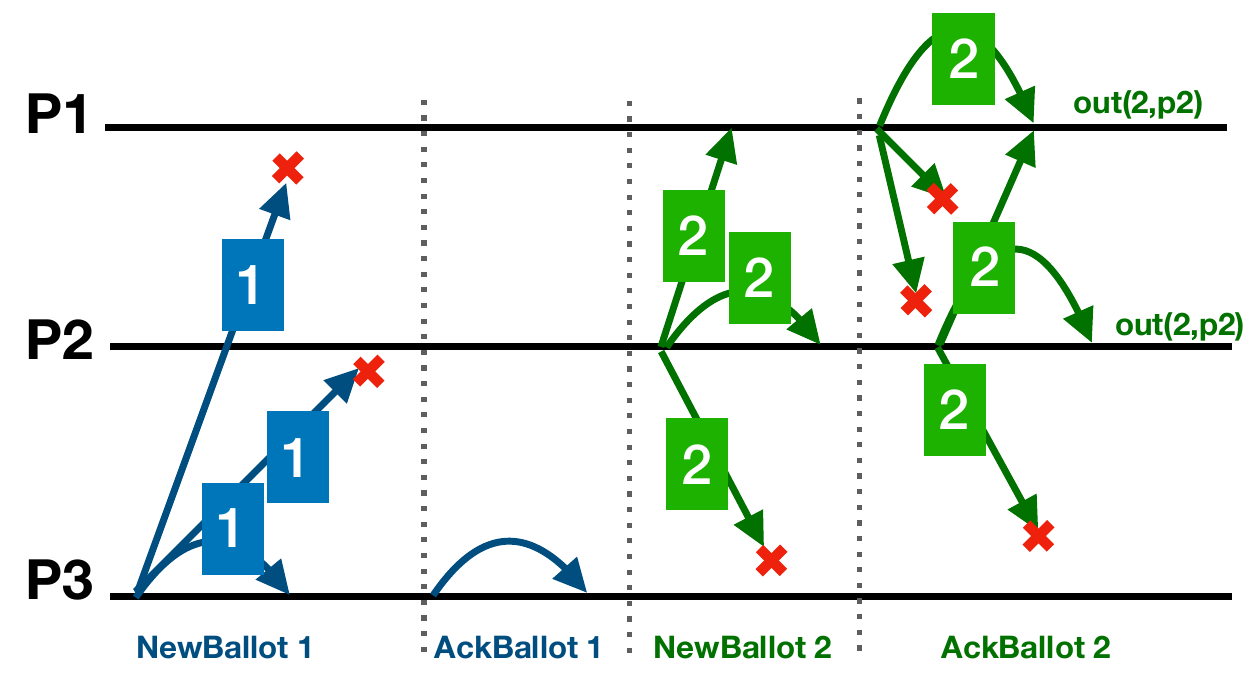}
\vspace{-1eX}
\caption{Asynchronous execution with jumps on the left and
  corresponding ``synchronous'' execution on the right}
\label{fig:asyncLE_Exec-jump1}
\end{figure}

Consider the asynchronous execution on the left of
     Figure~\ref{fig:asyncLE_Exec-jump1}.
Process~P1 always takes the follower branch, P3 is
     a candidate in ballot~1, but its messages get delayed.
So process~P2 times out in ballot~1 in line~\ref{ln:folendrcv}
     and~\ref{ln:folendrcv2}, and becomes a candidate in ballot~2.
Its message reaches P1, which jumps to ballot~2 and sends its leader
     estimate to all in line~\ref{ln:folsendelect}, as does P2.
However, only P1 receives all these message so that it gets over the
     $n/2$ threshold to set its leader in line~\ref{ln:folelect}.
Messages marked with a cross are dropped by the network.
As the messages by P1 arrive late, i.e., the receiver's local time passed the timestamp of the message, the messages sent by P1 become stale and are
     disregarded by P2 and~P3.

As a result, the late messages sent by P1 have the same effect as if they
     were dropped by the network.
In this view,  the execution of the right on
     Figure~\ref{fig:asyncLE_Exec-jump1} is obtained from the one  on
     the left by a ``rubber band transformation''~\cite{Mattern89},
     that is, transitions are reordered by maintaining to local
     control flow, and the causality imposed by sending and receiving
     a message.
We call the execution on the right the canonic form of the execution
     on the left.

\paragraph{Characterization of existence of a canonic form.}

Let us understand whether each execution of the asynchronous protocol
     from the example can be brought into a canonic form.
The first observation is that the variables \texttt{ballot} and
     \texttt{label} encode abstract time.
Let $b$ and $\ell$ be evaluations of the variables \texttt{ballot} and
     \texttt{label}.
Then abstract time ranges over $T =  \{(b,\ell)\colon b\in\mathbb{N},
     \ell\in
     \{\mbox{\texttt{NewBallot}},\mbox{\texttt{AckBallot}}\}\}$.
We fix $\mbox{\texttt{NewBallot}}$ to be less than
     $\mbox{\texttt{AckBallot}}$, and consider  the lexicographical
     order over $T$.
Then we observe that the sequence of $(b,\ell)$ induced by an
     execution at a process is monotonically increasing; thus
     $(b,\ell)$ encodes a notion of time.
However, a locally monotonic ascending sequence of values is not
     sufficient  to derive a global notion of time, i.e., a globally
     aligned ascending sequence of values, as in
     Figure~\ref{fig:asyncLE_Exec-jump1} on the right.
Technically, aligning means that we need a reduction argument where
     (i) we tag an event with the local time of the process at which
     it occurs, and (ii) if in an execution a transition $t$ tagged
     with $(b,\ell)$ happens before a transition $t'$ tagged with
     $(b',\ell')$ at another process, with $(b',\ell')<(b,\ell)$, then
     swapping these two transition should again result in an
     execution.
That is, $t$ and $t'$ should commute.
This condition is satisfied when stale messages are discarded.
In other words, it is ensured if the protocol is communication-closed:
     first, each process only sends for the current timestamp, e.g.,
     the send statement in line~\ref{ln:sendelect} sends a message
     that carries the current $(\texttt{ballot},\texttt{label})$ pair.
Second, each process receives only for the current or a higher
     timestamp, e.g., received messages are stored, e.g., in
     line~\ref{ln:filterrcvelect}, only if they carry the current or a
     future $(\texttt{ballot}, \texttt{label})$ pair;
     cf.~line~\ref{ln:filter}.

We introduce a tag annotation in which the programmer can provide us
     with the variables and parts of the messages that are supposed to
     encode abstract time and timestamps.
Using these tags, the protocol is annotated
     with verification conditions stating that (1) abstract time  is
     monotonically increasing (line~\ref{ln:tagleq}), which use the
     predicate $tag\_leq$ in Fig.~\ref{fig:preds}
     (page~\pageref{fig:preds}),   (2) each process sends only for the
     current timestamp (line~\ref{ln:sendassert}), and  (3) each
     process receives messages from the current or a higher timestamp
     (line~\ref{ln:recvassert}).
Given an annotated asynchronous protocol, we check the validity of
     these assertions  using the static verifier
     Verifast~\cite{verifast}.
Using Verifast was extremely useful to prove the conditions on the
     content of the mailbox,  which is an unbounded list
     (lines~\ref{ln:listtype} and~\ref{ln:listtageq}).
For example, the \texttt{assert} at line~\ref{ln:listtageq} states
     that all messages in the mailbox have their ballot and label
     fields equal with the  local variable \texttt{ballot} and
     \texttt{label}.
The other predicates are given Fig.~\ref{fig:preds}.

If these checks are successful, the existence of a canonical form is
     guaranteed by a new reduction theorem proven in
     Section~\ref{ssec:reduction}.
Our reduction uses ideas from~\cite{EF82} where the notion of
     communication closure was introduced in CSP.
Our notion of communication closure is more permissive than the
     original form, as we allow to react to  a message that is
     timestamped with higher value $(b',\ell')$ than the current local
     abstract time~$(b,\ell)$, provided that the code immediately
     ``jumps forward in time'' to $(b',\ell')$.
This corresponds in our example to P1 jumping to ballot~2 upon
     reception of the message by P2.

In contrast to Lipton's reduction~\cite{Lipton75}, where one proves
     that actions in an execution can be moved in order to get a
     similar execution with large atomic blocks of local code, for
     distributed algorithms one proves that one can group together the
     $(b,\ell)$ send transitions of all processes, then the~$(b,\ell)$
     receive transitions of all processes, and then all~$(b,\ell)$
     computation steps, for all times $(b,\ell)$ in increasing order.
In this way, we formally establish that the asynchronous execution
     from the left of Figure~\ref{fig:asyncLE_Exec-jump1} corresponds
     to the so-called round-based execution on the right.
Executions of this form we call \emph{canonic}.

% !TEX root = head2.tex
\lstset{language=ho,numbers=left,xleftmargin=2em,frame=single,framexleftmargin=1.5em}
\begin{figure}[t]
\begin{minipage}{0.49\textwidth}
\begin{lstlisting}[frame=none]
typedef struct Msg {int lab; int ballot; int sender;} msg;			    
typedef struct List{  msg *message;	struct List * next; int size;} list; <@\label{ln:slisttype}@>
int coord(){return pid } <@\label{ln:sleader}@>
bool all_same(list *mbox){
  list  *x=mbox;
  if (x!=NULL) val = x->message->sender;
  while(x!=NULL) { if(x->message->sender != val) return false;  else x= x->next;}
 return true;}
int phase(){return round/phase.length}
int coord();
void init(){
    me = getMyId();
    old_mbox1 = false;
    list_create(log_epoch);
    list_create(log_leader);
}

   \end{lstlisting}
\end{minipage}
\begin{minipage}{0.49\textwidth}
  \ContinueLineNumber
\begin{lstlisting}[frame=none]
phase = array[NewBallot; AckBallot]
round NewBallot:
send{
     if(coord()==me)  send(pid,*); <@\label{ln:leadsend}@>}
update(list * mbox){<@\label{ln:updatenew}@>
    if (mbox!= 0 && mbox->size ==1) {<@\label{ln:condupd1}@>
	leader = mbox->message->sender; <@\label{ln:setleadersync}@>
	old_mbox1 = true;}<@\label{ln:newend}@>
round AckEpoch:
send{
	if (old_mbox1) send(leader,*);}
update(list* mbox){
  if(old_mbox1 == true && mbox!=0 && mbox->size >n/2 && all_same(mbox)) {<@\label{ln:ackstart}@>
      out(phase(),leader);
      list_add(log_epoch,phase(),true); <@\label{ln:settruesync}@>
      list_add(log_leader,phase(),leader);
      old_mbox1= false; <@\label{ln:ackend}@>
      }
\end{lstlisting}
\end{minipage}
\caption{Synchronous Paxos-like Leader election in \newmodel.  }
\label{fig:sync-leader-election}
\end{figure}
%   if(pid == leader() && !jump) {
%   	 jump = random_bool();
%	 if (jump) continue ;
%	 }

%
%\lstset{language=ho,numbers=left,xleftmargin=2em,frame=single,framexleftmargin=1.5em}
%\begin{figure}
%  \hspace{1cm}
%\begin{lstlisting}[escapeinside={(*}{*)}, frame=none]
%def phase(){return round/phase.length}
%def jump_phase();
%def leader();
%def init(){
%    myId = getId();
%    epoch = 1;
%}
%val phase = array[Jump; NewBallot; AckBallot]
%// drop the jump round since in the semantics in the HO
%def round Jump:
%def UPDATE{
%	if(jump_phase()) continue;
%	if (!Fjump && !Ljump) leader = leader();
%	}
%
%def round CEpoch:
%def SEND{
%  	if(!Fjump)
%  		if(pid!=leader) send(epoch, leader());
%}
%def UPDATE(mbox){
%	if(!Fjump && pid == leader)
%		if (mbox!=0 && mbox->size  < n/2){
%			if(k) epoch++; continue};
%		else {epoch = max_epoch(mbox)+1;};
%		//I don;t model leaders jump
%}
%def round NewEpoch:
%def SEND{
%	if(!Fjump)
%		if(pid == leader) send(epoch,*);
%}
%def UPDATE{
%   if(!Fjump) { Fjump = jump_phase();
%                      if(Fjump) continue;
%                      }
%   Fjump = false;
%   //to check if this is here. ?
%   if (pid != leader()){
%	if (mbox!= 0 && mbox->size ==1) {
%
%		epoch = mbox->message->epoch;
%		//follower jump in the first instruction
%		}else{ if(k) epoch++; continue;}
%   }
%}
%def round AckEpoch:
%def SEND{
%	if(pid!=leader)
%			send(epoch,leader);
%}
%def UPDATE{
%	if(mbox!=0 && mbox->size >=n/2) {
%		out(epoch,leader);
%		list_add(log,(epoch,leader));
%		epoch++;
%	}else{ if (k) epoch++; }
%\end{lstlisting}
%\caption{Synchronous Zab Discovery in \newmodel. When the leader jumps over phase, the jump is the first instruction it executes. Therefore we simplify the protocol to omit case spilts over the leaders jumps, because they are captured already by the jump round, since the leader does not send any messages for the past phase. }
%\label{fig:sync-leader-election}
%\end{figure}

\paragraph{A computational model for canonic executions.}

Since we can reduce an asynchronous execution to a canonic one, our
     goal is to re-write an asynchronous protocol into a program with
     round-based semantics.
For this we first need to establish a programming model for canonic
     (round-based) protocols.
Several \emph{round models} exist in the
literature~\cite{DLS88:jacm,Gafni98,SW89:stacs,Charron-BostS09}.
We adapt ideas from these models for our needs, and introduce our new
     \newmodel\ model.
It allows us to express a more fine grained modeling of faults,
     network timing, and sub-routines.
The closest model from the literature is the Heard-Of
     Model~\cite{Charron-BostS09} that \newmodel\ extends to
     multi-shot algorithms (multiple inputs received during the
     executions) and has a compositional semantics based on
     synchronized distributed procedure calls.
Figure~\ref{fig:sync-leader-election} shows the \newmodel\ program
     obtained from Figure~\ref{fig:async-leader-election}.

\begin{figure}[tp]
  \vspace{-3eX}
  \centering
  \includegraphics[scale=0.30]{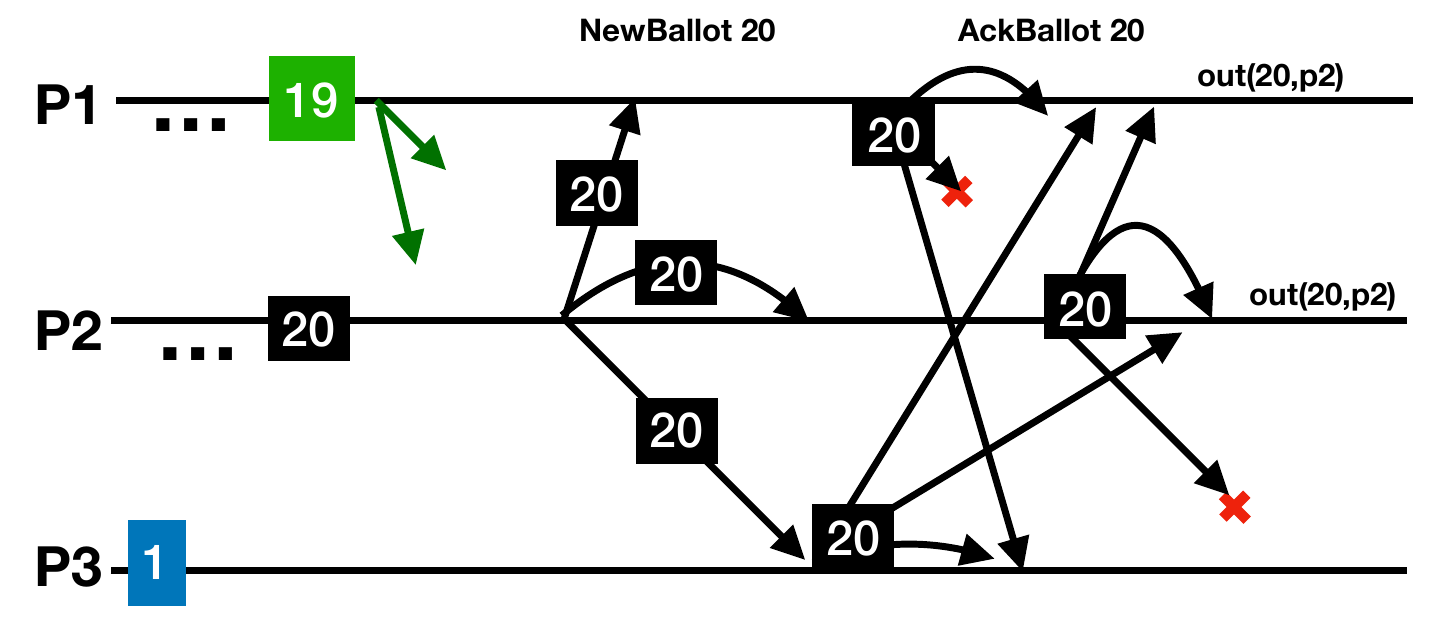}\hspace{1cm}
  \includegraphics[scale=0.30]{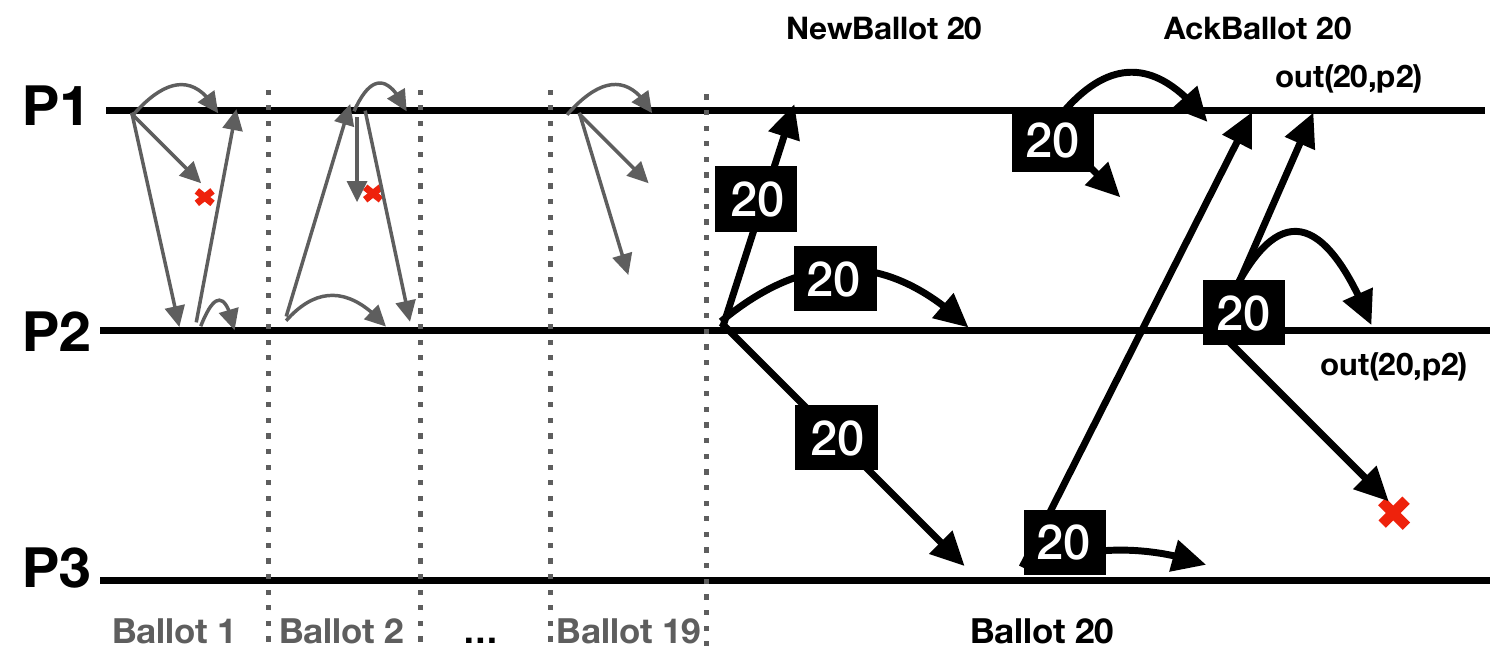}
\vspace{-1eX}
\caption{Execution with jumps}
\label{fig:asyncLE_Exec-jump}
\end{figure}

%\cd{explain phase/ phase.lenght in the figure comp ho example }

The interesting feature that abstracts away faults and timeouts are
     the so-called $\mathit{HO}$ sets.
For each round  $(b,\ell)$ and each process~$p$, the set
     $\mathit{HO}(p,(b,\ell))$ contains the set of processes from which $p$
     hears of in that round, i.e., whose messages show up in
     \texttt{mbox} in, e.g., line~\ref{ln:updatenew}.
For instance, in the simplest case, if a message from a process $q$ is
     lost, it just does not appear in the $\mathit{HO}$ set.
But also the forward jumping is accounted for.
Consider the execution on the left of
     Figure~\ref{fig:asyncLE_Exec-jump}. While the processes P1 and P2
     made progress,
P3 was disconnected from them. Then, while locally still being in
     ballot~1, P3 receives a message for ballot~20.
The execution on the right is an execution in \newmodel\ and the jump
     by Process~P3 is captured by $\mathit{HO}(P3,(i,j)) = \emptyset$, for
     $(1,\mbox{\texttt{NewBallot}})\le (i,j) <(20,\mbox{\texttt{NewBallot}})$.
For all the skipped rounds, \texttt{mbox} evaluates to the empty list.

We have augmented the Heard-Of Model with  in() and out() primitives for
     multi-shot algorithms such as state machine replication where the
     system gets commands from an external client and should output
     results.

\paragraph{Computing the canonic form.}

Having defined the round-based semantics of \newmodel,  we introduce a
     rewriting procedure.
It takes as input the asynchronous protocol together with the
     annotations that have been checked to entail a canonic form, and
     produce as output the protocol rewritten in
     \newmodel.

The main challenges for the rewriting come from the different possible
     control flows of the programs and their relation to the abstract
     notion of time.
Due to branching (e.g., in line~\ref{ln:checkleader}), code that
     appears in different places may be required to be composed into
     code for the same round; e.g., line~\ref{ln:settrue} on the
     leader branch belongs to the same round as
     line~\ref{ln:folsettrue} on the follower branch and will end up
     in line~\ref{ln:settruesync} of the canonic form. (More
     precisely, the generated code has
     branching within a round and the statement of
     line~\ref{ln:settruesync} appears in both branches. We simplified
     the code given
     in Figure~\ref{fig:sync-leader-election} for better readability.)
In addition,  there are jumps in the local time;
     cf.~Figure~\ref{fig:asyncLE_Exec-jump}.
This corresponds in the \newmodel\ to phases and rounds that are
     skipped over by a process, that is, it neither receives nor sends
     messages, and maintains (stutters) its local state.
The asynchronous statements before and after a jump must be properly
     mapped to rounds in \newmodel.
We address these issues in Section~\ref{sec:rewriting} and have
     implemented our solution.
We used it to automatically generate \newmodel\ code for several
asynchronous protocols.

\paragraph{Benefits of a round-based synchronous normal form.}

The generated \newmodel\ code represents a valuable design artifact.
First, a designer can check whether the implementation meets the
     original intuition.
For instance, the left execution in Figure~\ref{fig:asyncLE_Exec-jump}
     gives a typical asynchronous execution.
In papers on systems, designers explain their
     systems with well-formed executions like the one on the right.
The designer can check with the \newmodel\ code whether the asynchronous
     protocol implements the intended ballot and round structure, and
     whether phase jumps can occur only at intended places.
Second, it helps in comparing protocols: Different ways to implement
     the branching due to roles (e.g., leader, follower) leads to
     different asynchronous protocols.
If different asynchronous protocols have the same canonic form,  then
     they encode the same distributed algorithm.

Finally, the canonic form paves the way for automated verification.
The specification of the running example (in the asynchronous and
     synchronous version) is that in any ballot $b$, if two processes
     $p_1$ and $p_2$ find that leader election was successful
     (i.e., their $\texttt{log\_ballot}$ entry is true), then they agree
     on the leader:
\begin{multline}
\forall p_1,p_2,\ \forall b\  (\texttt{log\_ballot}[p_1][b] = true \land \texttt{log\_ballot}[p_2][b] = true) \limp \\
 \texttt{log\_leader}[p_1][b] = \texttt{log\_leader}[p_2][b] \label{propintro}
\end{multline}

%We now discuss the differences in the verification of the asynchronous
%     and the synchronous version.
To prove this property in the asynchronous model, already for the first ballot, i.e.,
$b=1$, one needs to introduce auxiliary variables to be able to state an inductive invariant.
%
%let us consider first the simple case, of
%     verifying the property just on a prefix of an asynchronous
%     execution in which no process enters ballot~2, that is, one-shot
%     leader election.
%The proof of the asynchronous protocol requires an invariant  stated
%     over auxiliary variables.
A process elects a leader if it receives more than $n/2$ messages
     having the same \emph{leader id} as payload, so one needs
     to reason about the set of messages they received.
As discussed in~\cite{SergeyWT18}, this can only be achieved by introducing
     auxiliary variables that record the complete
     history of the message pool.
Then one can state an invariant over the 
     asynchronous execution that relates the local state of processes
     (decided or not) with the message pool history. %which contains more
   %  than $n/2$ (consumed) messages for every processes that decided.

The proof of the same property for the synchronous protocol requires
     no such invariant.
Due to communication closure, no messages need to be maintained after
     a round terminated, that is, there is no message pool.
One just needs to consider the transition relation of a phase, or ballot
     (conjunction of two rounds).
The global state after the transition, that is, at the end of the
     phase, captures exactly which  processes elected a leader in the considered phase.
%And agreement can be checked on this single global state.

In general, to prove the specification, we need invariants
     that quantify over the ballot number~$b$.
As processes decide asynchronously, the proof of ballot~1 for some
     process $p$ must refer to the first entry of \texttt{log\_ballot}
     of processes that might  already be in ballot~400.
Thus, the invariants need to capture the complete message history and
     the complete local state of processes. % (in particular the lists
     %$\texttt{log\_ballot}$ and $\texttt{log\_leader}$).
The proof in the synchronous case is modular:  For any two phases, messages do
     not interfere and processes write to different ballot/phase
     entries. Therefore the agreement proof for one ballot generalizes for all
           ballots.

%% In general, in the asynchronous model,
%% we need an invariant on the message pool and local states that needs to quantify
%%      over ballot numbers $b$ to distinguish in the message pool the
%%      messages sent in a ballot.
%% Moreover, the invariant does a case split over the received messages
%%      based on their ballot timestamps.
%% Other invariants consider the entire decision log, i.e.,
%%      $\texttt{log\_ballot}$ and $\texttt{log\_leader}$, that is, not
%%      just one entry: processes decide asynchronously therefore the
%%      proof of ballot~1 for some process $p$ must refer to the first entry of
%%      \texttt{log\_ballot} of processes that might due to asynchrony
%%      have made it already to ballot~400.
%% The proof in the synchronous case is modular: the ballot is the phase
%%      number, which is incremented with each iteration of the
%%      algorithms.
%% Any two phase are different and they write to different ballot/phase
%%      entries.
%% Therefore the agreement proof for one ballot generalizes for all
%%      ballots.

Many verification techniques benefit from a reduction from
     asynchronous to synchronous.
In particular, the model checking techniques in~\cite{MaricSB17,TS11}
     are desingned specifically for the Heard-Of
     model~\cite{Charron-BostS09,BielyWCGHS07}, and can be applied on
     our output.
Theorem provers like Isabelle/HOL where successfully used to prove
     total correctness of algorithms in the Heard-Of
     model~\cite{Charron-BostDM11}.
We used deductive verification methods for the Heard-Of model~\cite{vmcai} and proved the (partial)
     correctness of the synchronous version of the running  example
     (and other protocols).

% %

%\input{fig-ex-rewritejump}

%\input{glance}

% !TEX root = head2.tex
\section{Asynchronous protocols}
\label{ssec:async-sem}

\begin{figure}[t]
\scalebox{.8}{
  \begin{minipage}{0.50\linewidth}\centering
\begin{tabular}{ r c l l}
    e & := &   & expression\\
      & |  & c  & constant \\
      & |  & x  & variable \\
      & |  & f($\vv{e}$) & operation\\[2mm]
types & :=  & $\pidtype$ & process Id  \\
      & & $\type$ & user defined \\
      &&& or primitive type \\
      & &  $\paytype$ & payload type  \\
      & & p : $\pidtype$, & m : $\paytype$ \\
      & & Mbox: & set of $(\paytype, Pid)$ \\
[2mm]
   P & := & $\Pi_{p\in \mathcal{P}id} [S]_p$ & protocol \\
% mc & := & & messages conditions\\
%       & | & true \\
%       & | & m.p == e & filter message \\
%       & | & mc $\&$ mc \\
%       & | & m $\in$ M &  set of messages\\
%       & | & |M| $\geq$ e & cardinality \\
%       & | & timeout() & timeout expired \\
%       & | & !mc \\
\end{tabular}
\end{minipage}
\begin{minipage}{0.48\linewidth}\centering
\begin{tabular}{ r c l l}
    S & :=  &   & statement \\
      & | & S ; S &  sequence  \\
      & | & x := e &  assignment  \\
      & | & reset\_timeout(e) & reset a timeout\\
      & | & send(m,p) | send(m,*) &  send message \\
      %& | (TO, M) = recv(cond) & receive messages(s) \\
      & | & (m,p) := recv() & receive message \\
      & | & if e then S else S \\
      & | & while true S\\
      & | & break \\
      & | & continue \\
      & | & x = in() & client entry \\
      & | & out(e) & client output\\
\end{tabular}
\end{minipage}}
  \caption{Syntax of asynchronous protocols.}
  \label{fig:syntax-async}
\end{figure}

Protocols are written in the core language 
     in Fig~\ref{fig:syntax-async}.
All processes execute the same sequential code, which is  
     enriched with send, receive, and timeout statements.

The communication between processes is done via typed messages.
Message payloads, denoted $\paytype$, are wrappers of primitive or
     composite type.
Wrappers are used to distinguish payload types from the types of the
     other program variables.
Send instructions take as input an object of some payload type and the
     receivers identity or $*$ corresponding to a send to all
     (broadcast).
Receive statements return an object of payload type and the identity
     of the sender, that is, one message is received at a time.
Receives are not blocking.
If no message is available, receive returns $\bot$.
We assume that each loop contains at least one send or receive
     statement.
The iterative sequential computations are done in local functions,
     i.e., \texttt{f($\vv{e}$)}.
The instructions \texttt{in()} and \texttt{out()} are used to
     communicate with an external environment (processes not running
     the protocol).

The semantics of a program \prog\ is the asynchronous parallel composition of the actions performed by all processes.
Formally, the state of a protocol $\prog$ is a tuple $\tuple{s,msg}$ where:
     $s \in [P \rightarrow \vars \cup\loc \rightarrow \datatype]$
     is a valuation of the variables in \prog where the program location is
     added to the local state and
     %$s_b$ is a local buffer used to store
     %delivered but unused messages;
     $msg ∈ [\paytype → (P, \texttt{T},P,ℕ)
     → ℕ]$ is the multiset of messages in transit (the network may
     lose and duplicate messages).
% Fig.~\ref{fig:async-sem} defines the local transition system of a
%      process, and the system semantics which is the asynchronous
%      parallel composition of the local transition systems.
Given a process p ∈ P , s(p) is the local state of p, which is a valuation of p’s local variables, i.e., s(p) ∈ [Vp → D]. We use a special value~⊥ to represent the state of crashed processes. When comparing local states, ⊥ is treated as a wildcard state that matches any state.

The messages sent by a process are  added to the global pool of
     messages~$msg$, and a receive statement removes a messages from
     the pool.  The interface operations \texttt{in} and \texttt{out}
     do not modify the local state of a process.
These are the only statements that generate observable events.

% The \textsc{Send} rule states that the messages sent by a process
%      are  added to the global pool of messages~$msg$.
% The \textsc{Receive} rule defines message reception.
% It affects only the message buffers and variable of some message type.
% %The \textsc{Mailbox} rule defines a write to a reception variable,
%   %   denoted \texttt{Mbox}.
%   Asynchronous communication between processes means that there is no bound on the delay between sending and receiving a message.
%   Timeouts objects are used to estimate the waiting time.
%   They have two methods: reset and timeout.
% Reset can have as parameter the maximal value of the clock.
% If missing a default one is considered.
% The method timeout returns true or false, depending on weather the clock reached its maximal value since the last time it was reset.
% All the other statements have the usual semantics.
% The rule \textsc{Statement1} says that statements corresponding to
%      interface operations, denoted $\alpha_P$, do not modify the local
%      state, while \textsc{Statement2} says that all the
%      other statements have the usual semantics.
% While we omit the rules that describe the fault model, we consider
% that processes, do not recover, and crashed
%      process do not modify the global state.
% Messages can be duplicated and dropped by the network.

An execution is an infinite sequence $s$0 $A$0 $s$1 $A$1 . . . such that $\forall i ≥ 0$,
$si$ is a protocol state, $Ai ∈ A$ is a local statement and $(si \xrightarrow{Ai} si+1)$ is a transition of the form
$\tuple{s,msg} \overset{I,O}{\longrightarrow} \tuple{s',msg'}$ corresponding to the execution of $Ai$, where $\{I,O\}$ are the observable events generated by the $Ai$ (if any).
We denote by  $\sem{\prog}\antic$ the set of executions of the protocol \prog.

%
% !TEX root = head2.tex

\section{Round-based model}\label{sec:rbmodel}
We introduce \newmodel\ by first presenting the syntax and semantics
     of the intra-procedural version of \newmodel, extending it then
     to inter-procedural case.

\paragraph{Intra-procedural \newmodel\ Model.}\label{sec:intra}

\newmodel\ captures round-based distributed algorithms:
     all processes execute the same code and the computation is
     structured in rounds, where the round number is an abstract
     notion of time: processes are in the same round, and progress to
     the next round simultaneously.
%% \jw {the next sentence comes again in the semantics}
%% More precisely, all processes simultaneously send messages for the
%%      current round to some or all of the other processes, then all
%%      processes simultaneously receive messages for the current round,
%%      and then all processes simultaneously do a state transition
%%      depending on the local state and the messages received.
We denote by $P$ the set of processes and $n=|P|$ is a parameter.
Faults and timeouts are modeled by messages not being received.
In this way the central concept is the Heard-Of set, $\mathit{HO}$-set for
     short, where $\mathit{HO}(p,r)$ contains the processes from which
     process~$p$ has \emph{heard of}\dash---has received messages
     from\dash---in round~$r$.

%\begin{figure}

%\end{figure}

     \paragraph{Syntax.}

         \begin{wrapfigure}{r}{0.5\textwidth}
     \centering
     {\small
     \begin{tabular}{rcl}
     {\em protocol}      & ::= &  {\em interface} {\em variable}$^*$ {\em init} {\em phase} \\
     {\em interface}        & ::= & \texttt{in}: () $\rightarrow$ {\em type} | \texttt{out}: {\em type} $\rightarrow$ () \\
     {\em variable}         & ::= & {\em name}: {\em type} \\
     {\em init}             & ::= & \init: () $\rightarrow$ $[P → V → \mathcal{D}]$ \\
     {\em phase}            &::= & {\em round}$^+$ \\
     {\em round}$_{\sf T}$ & ::= & \send: $[P → V]$ $\rightarrow$ $[P \rightharpoonup {\sf T} ]$\\
                && \update: $[P \rightharpoonup {\sf T} ] \times [P → V]$ \\
                &&$\rightarrow$ $[P → V]$ \\
     \end{tabular}}
     \caption{\newmodel\ syntax.}
     \label{fig:HO-syntax}
     \vspace{-3eX}
     \end{wrapfigure}

A \newmodel\ protocol is composed of local variables, an
     initialization operation $\init$, and a non-empty sequence of
     rounds, called phase.
The syntax  is given in
     Fig.~\ref{fig:HO-syntax}.
A round is an object with a send and update method, and the phase is a
     fixed-size array of rounds.
     Each round is parameterized by a type \textsf{T} (denoted
     by  {\em
     round}$_{\sf T}$) which represents the payload of the messages.
The \send function has no side effects and returns the messages to be
     sent, a partial map from receivers to payloads,   based on the
     local state of each sender.
The \update function, takes as input the received messages, i.e., a
     partial map from senders to payloads, and updates the local state
     of a process.
It may communicate with an external client via
     $\mathtt{in}$, which returns an input value,
     and $\mathtt{out}$ which outputs a a value to
     the client.
For data computations,  \update uses iterative control structure only
     indirectly via auxiliary functions,  like $\texttt{all\_same}$ in
     the running example, whose definition we~omit.

% \todo{Fig.~\ref{fig:simple-sync-leader-election} and
%     Fig.~\ref{fig:sync-leader-election} show \newmodel-protocols.}

\paragraph{Semantics.} The set of executions of a
\newmodel\ protocol is defined by the execution of the \send
     and \update functions of the rounds in the phase array in a loop, starting
     from the initial configuration defined by \init.

A protocol state is a tuple $\tuple{SU,s,r,msg,P,\ho}$ where:
\begin{itemize}
\item
$P$ is the set of processes executing the protocol;
\item
$SU ∈ \{Snd,Updt\}$ indicates if the next operation is send or update;
\item
$s ∈ [P → V → \mathcal{D}]$ stores the  process local states;
\item
$r ∈ ℕ$ is the round number, i.e., the counter for the executed rounds;
\item
$msg ⊆ 2^{P,{\sf T},P}$ stores the in-transit messages, where ${\sf T}$ is the type of the message payload;
%between the \textsc{Send} and \textsc{Update} operations of a machine;
\item
$\ho ∈ [P  → 2^P]$  evaluates the \ho-sets for the current round.
\end{itemize}

\begin{figure}[t]
          {\scriptsize
          \begin{mathpar}

          \inferrule[Start]{
              \overset{\init()}{\longrightarrow} s(p)
            }{
              \nothing \overset{∅,\{\init_p()\mid p∈P\}}{\longrightarrow} \tuple{Snd,s,0,∅,P,\ho}
            }\hspace{0.2cm}
          \inferrule[Send]{
            %\mbox{
          %  {\begin{tabular}{b}[l]
          %   aa  dd
            ∀ p∈P.\  s(p) \overset{\phase[r].\send(m_p)}{\longrightarrow} s(p) \\
             msg ={\begin{array}{c}\{ (p,t,q)\mid \\ p ∈ P ∧ (t,q) ∈ m_p \}\end{array}}
          %\end{tabular}}}
          }{
            %\tuple{Snd,s,r,∅,A,\ho}
            \left<Snd,s,r,∅,P,\ho \right>
            \xrightarrow[p ∈ P]{\{\send_p(m_p)\},∅}
              \left< Updt,s,r,msg, P,\ho' \right>
             %\tuple{Updt,s,r,msg,A,\ho'}
          }

%          {
%            %\tuple{Snd,s,r,∅,A,\ho}
%            \left<{\begin{array}{c}Snd,s,r,\\∅,P,\ho\end{array}}\right>
%            \xrightarrow[p ∈ P]{\{\send_p(m_p)\},∅}
%              \left<{\begin{array}{c}Updt,s,r,\\msg, P,\ho'\end{array}}\right>
%             %\tuple{Updt,s,r,msg,A,\ho'}
%          }

          \inferrule[Update]{
            ∀ p∈P.\ mbox_p = \{ (q,t)\mid (q,t,p) \in msg ∧ q \in \ho(p) \} \\\hspace{-2mm}
            ∀ p∈P.\ s(p) \overset{\phase[r].\update(mbox_p)), o_p}{\longrightarrow} s'(p) \\
            r' = r+1  \\
            O = \{o_p\mid p ∈ P\}
          }{
            \tuple{Updt,s,r,msg,P,\ho} \overset{\{\update_p(mbox_p)\mid p ∈ P\},O}{\longrightarrow} \tuple{Snd,s',r',∅,P,\ho}
          }
          \end{mathpar}
          }
          \vspace{-2eX}
 \caption{\newmodel\ semantics.}
 \label{fig:HO-semantics}
  \vspace{-3eX}
  %\end{minipage}
  \end{figure}

The semantics is shown in Figure~\ref{fig:HO-semantics}.
Initially the system state is undefined, denoted by $\nothing$.
The first transition calls the \init operation on all processes~(see
     \textsc{Start} in Fig.~\ref{fig:HO-semantics}), initializing the
     state: The round is~$0$, no messages are in the system.
\textsc{Start} brings the system into a $Snd$ state that requires the
     next transition to be a \textsc{Send}.
After that, an execution alternates \textsc{Send} and \textsc{Update}
     transitions.
In the \textsc{Send} step, all processes send messages, which are
     added to a pool of messages $msg$, without modifying the local
     states.
The values of the $HO$ sets are updated non-deterministically to be a
     subset of $P$.
The messages in $msg$ are triples of the form (sender, payload,
     recipient), where the sender and receiver are processes and the
     payload has type \texttt{T}.
The triples are obtained from the map returned by \send to which we
     add the identity of the process that executed \send.
In an \textsc{Update} step, messages are received and the \update
     operation is applied in each process.
A message is lost if the sender's identity does not belong to the $HO$
     set of the receiver.
The set of received messages is the input of \update.
If the processes communicate with an external process, then \update
     might produce observable events~$o_p$.
These events correspond to calls to $\mathtt{in}$, which returns an
     input value, and $\mathtt{out}$ that sends
     the value given as parameter to the client.
The communication with external processes is non-blocking; we assume
     that the function  $\mathtt{in}$ always returns a value when
     called.
At the end of the round, $msg$ is purged and $r$ is incremented by~1.

\begin{example}
The right diagram of Fig.~\ref{fig:asyncLE_Exec-jump1} corresponds to
     an execution of the \newmodel\ protocol
     in~Fig.~\ref{fig:sync-leader-election}.
The \textsc{Send} step of round \texttt{AckEpoch} consists of process
     P3 sending in line~\ref{ln:leadsend}, and the environment
     dropping its messages to P1 and P2.
As they do not receive messages, \textsc{Update} does not
     result in a state change due to
     line~\ref{ln:condupd1}.
Hence \texttt{old\_mbox1} does no change so that the guard in
     line~\ref{ln:condupd1} evaluates to false at P2 and P3, so that
     they do not send in the \texttt{AckEpoch} round.
%\jw{In the figure, P1's
%     message to itself in NewBallot 1 should be dropped}
\end{example}

\paragraph{Inter-procedural \newmodel\ Model.}

    We introduce \emph{distributed procedure calls} to
    capture realistic examples.
In Multi-Paxos~\cite{generalizedpaxos} processes agree on a order over
     client commands.
This order is stored in a local log, that  contains the commands
     received/committed so far.
Consider Figure~\ref{fig:compHO-exec}.
Here  a new leader gets elected with a \texttt{NewBallot} and
     \texttt{AckBallot} message exchange, almost as in our example in
     Section~\ref{sec:glance}.
The difference is the \texttt{AckBallot} round where followers (1)
     send only  to the leader instead of an all-to-all communication,
     (2) the message payload contains the current log of the follower.
Then the leader computes the longest log and sends it to its
     followers, in the third round called \texttt{NewLog}.
Those that receive the new log start a subprotocol.
The subprotocol iterates through an unbounded number of phases each
     consisting of a sequence of rounds, \texttt{Prepare},
     \texttt{PrepareOK}  and \texttt{Commit}, in which the replicas
     put commands in their logs.
Iteratively, the leader takes a new input command from the client and
     forwards it to the replicas using a \texttt{Prepare} message.
Followers reply with  \texttt{PrepareOK} acknowledging the reception
     of the new command.
If the leader receives $n/2$ acknowledgements it sends a
     \texttt{Commit} message, otherwise it considers its quorum lost
     and returns to leader election.
A follower that does not receive a message from the leader, considers
     the leader crashed, and control returns from the subprotocol to
     the leader election~protocol.

   \begin{wrapfigure}{r}{0.5\textwidth}
  \includegraphics[scale=0.50]{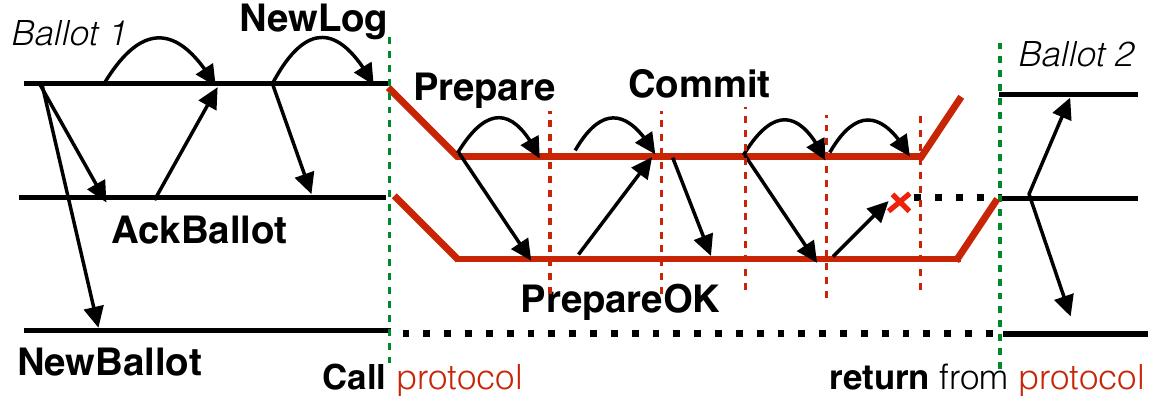}
  \caption{Inter-procedural execution}
  \label{fig:compHO-exec}
  \vspace{-3eX}
     \end{wrapfigure}

We only sketch the model.
The inter-procedural \newmodel\ protocol differs from its
     intra-procedural version only in the \update function.
A process may call another protocol and block until the call to this
     other protocol returns.
An $\update$ may call at most one protocol on each path in its control
     flow (a sequence of calls can be implemented using multiple
     rounds).
Due to branching, only a subset of the processes may make a call in a
     round.
Thus, an inter-procedural \newmodel\ protocol is a collection of
     (inter/intra-procedural) non-recursive \newmodel\ protocols, that
     call each other, with a main protocol as entry point.

%
%
%\input{Indiclo}
% !TEX root = head2.tex

\section{Formalizing Communication Closure using Tags}

We now introduce tags that use so-called synchronization variables to
     annotate protocols.
A tagging function induces a local decomposition of any execution,
     where a new block starts whenever the evaluation of the
     synchronization variables changes.
(Recall the set $T$ for our example in Section~\ref{sec:glance}.) This
     tagging thus represents a novel formalization of
     \emph{communication-closed} protocols using syntactic definitions
     of local decompositions.

%% We introduce \emph{tag annotations}, that
%% \begin{enumerate}
%% \item associate each control location with  variables,
%%   such that the values of those variables define the block
%%   where the execution of the corresponding instruction belongs to, and

%% \item map each message type to some of these variables, postulating that each message sent or received is tagged with the block it belongs to.
%% \end{enumerate}

\begin{definition}[Tag annotation]
  For a protocol $\prog$, a {\em tag annotation} is a tuple
  $(\svars,\tags, \tagm)$:
   \begin{itemize}
     \item $\svars=(v_1,v_2,\ldots,v_{|\svars|})$ is a tuple of fresh variables,
     \item $\tags:\loc \rightarrow [\svars \overset{\mathit{injective}}\rightharpoonup \vars]$, is a function that annotates each control location with a partially defined injective function, that maps $\svars$ over protocol variables, and
     \item $\tagm:\paytype  \rightarrow [\svars \overset{\mathit{injective}}\rightharpoonup {\sf T}]$ is an injective partially defined function, that maps variables in $\svars$ to components of the message type ${\sf T}$ (of the same type).
   %  \item $\tagm:  \paytype \rightarrow [{\sf T} \overset{\mathit{injective}}\rightharpoonup \svars]$
   % is an injective partially defined function, that maps a component of a record type ${\sf T}$ to a variable in $\svars$ of the same type.
   \end{itemize}
  The evaluation of a tag over \prog's semantics is denoted $(\sem\tags\antic,\sem\tagm\antic)$, where
  \begin{itemize}
    \item $\sem\tags\antic:\Sigma\ra[\svars\ra \datatype \cup \bot]$,
    is a function over the set of local process states, $\Sigma=\bigcup_{s\in\sem\prog\antic}\bigcup_{p\in P}s(p)$,
    defined by
      $\sem\tags\antic_s = (d_1,\ldots, d_{|\svars|})$, with
        \begin{itemize}
          \item $d_i=\sem{\sf x_i}\antic_s$ if ${\sf x_i}= \tags(\sem\pc\antic_s)(v_i)\in\vars$, where  $v_i$ is the i$^{th}$ variable in $\svars$ and $\pc$ is the program counter,
          \item otherwise $d_i=\bot$.
        \end{itemize}
    \item $\sem\tagm\antic: \paytype\ra \dompaytype \ra[\svars\ra \datatype\cup \bot]$ is a function that for any value $m=(m_1,\ldots,m_t)$ of message type ${\sf T}$ associates a tuple
         $\sem\tagm\antic_{m:{\sf T}}=(d_1,\ldots,d_{|\svars|})$ with
      %$\sem\tagm\antic_{m:{\sf T}}=(v_1,\ldots,v_{|\svars|})$ with
      \begin{itemize}
        \item $d_i=m_j$ if $m_j = \tagm({\sf T})(v_i)$,
          where ${v_i}$ is the i$^{th}$ variable in $\svars$ and $\tagm({\sf T})$, the mapping of \svars over the message type {\sf T}, is defined in~$v_i$;
        \item $d_i = \bot$, otherwise.
      \end{itemize}
  \end{itemize}
\end{definition}

\begin{example}
For the protocol in Fig.~\ref{fig:async-leader-election}, we consider
     the tag annotation over two variables $(v_1,v_2)$ that at all
     control locations associates $v_1$ with the ballot number, and
     $v_2$ with \texttt{label}.
The first two components of messages of type $\mathtt{(int,enum,int)}$
     are mapped to $(v_1,v_2)$.
A message $m=(3,\mathtt{NewBallot},5)$ (sent in
     line~\ref{ln:sendelect}) is evaluated by $\tagm$ into
     $(3,\mathtt{NewBallot})$.
The state tag evaluates into $(3,\mathtt{NewBallot})$ if the value of
     the variable $\mathtt{ballot}$ is $3$.
\end{example}

We characterize tag annotations that imply
     communication closure:

\begin{definition}[Synchronization tag]\label{def:synctags}
Given a program $\prog$, an annotation tag $(\svars,\tags,\tagm)$
is called {\em synchronization tag} iff:
\begin{enumerate}[label=(\Roman*.)]
  \item for any local execution $\pi=s_0 A_0 s_1 A_1 \ldots \in \sem\prog\antic_{p}$ of a process $p$ in the semantics of $\prog$,
  $\sem\tags\antic_{s_0}\sem\tags\antic_{s_1}\sem\tags\antic_{s_2}\ldots$
  is a monotonically increasing sequence of tuples of values
  w.r.t. the lexicographic order.
\label{sync:sequence}

\item for any local execution $\pi\in \sem\prog\antic_{p}$, if $s \overset{send(m,p)}\longrightarrow s'$ is a transition of $\pi$, with $m$ a value of some message type, then $\sem\tags\antic_s = \sem\tagm\antic_{m}$
  and $\sem\tags\antic_s = \sem\tags\antic_{s'}$.
\label{sync:send}

\item for any local execution $\pi\in \sem\prog\antic_{p}$, if $s \xrightarrow{{(m,p)=\mathit{recv}()}} sr$ is a transition of $\pi$, with $m$ a value of some message type, then
\begin{itemize}
  \item if $(m,p)\neq\texttt{null}$ then
      \begin{itemize}
        \item $\sem\tags\antic_s\leq\sem\tagm\antic_{m}$  if
          $\tagm({\sf T})$ is surjective ({\sf T} is a message type.\\ Moreover,
          $\{\{\sem\tagm\antic_{m}\mid m\in \sem{\sf Mbox}\antic_s\}
\subseteq \{  \sem\tags\antic_s , \sem\tagm\antic_{m}   \}        $.
        \item $\sem\tags\antic_s=\sem\tagm\antic_{m}$, otherwise. Also $\sem\tags\antic_s=\sem\tags\antic_{sr}$.
      \end{itemize}
  \item if $(m,p)=\texttt{null}$ then $s=sr$.
\end{itemize}
\label{sync:receive}

\item for any local execution $\pi\in \sem\prog\antic_{p}$,
  if $s\overset{\sf stm}\rightarrow s'$ is a transition of $\pi$ such that
  \begin{itemize}
     \item $s\neq s'$, $s\mid_{\paytype,\svars} = s'\mid_{\paytype,\svars}$, that is, s and s' differ on the variables that are neither of some message type nor synchronization variables,
     \item or \texttt{stm} is a {\em send, break, continue, or out()},
   \end{itemize}
   then $\sem\tags\antic_{s}=max\{\sem\tagm\antic_{m}\mid m\in \sem{\sf Mbox}\antic_s\}$,  for all {\sf Mbox:Set({\sf T})}, {\sf T}$\in\paytype$, with $\sem{\sf Mbox}\antic_s \neq \emptyset$. That is, observable state changes and sends happen only if the state tag matches the maximal received message tag.
  \label{sync:skip}

\end{enumerate}

\label{def:synctag}

 If an annotation tag is a synchronization tag, the variables that annotated the protocol are called {\em synchronization variables}.
\end{definition}

Condition~\ref{sync:sequence} states that the variables incarnating
     the abstract time are not
     decreased by any local statement.
Condition~\ref{sync:send} states that any message sent is tagged with
     a timestamp that equals the local time of the current state.
Condition~\ref{sync:receive} states that any message received and
     stored is tagged with a timestamp greater or equal than the
     current time of the process.
All messages timestamped with greater values than the local time, must have equal timestamps.
%If a message contains more than a round number, its tag must coincide
     with the tag of the state where it is received.
%% \jw{Why? Because ZAB new epoch computations uses msgs from different rounds.}
Finally, \ref{sync:skip} states if messages from future rounds are
     stored in the reception variables, any statement that
     is executed in the following must not change the
     observable state, but rather increase the tag until the process
     has arrived at the maximal time it received a message from.

\paragraph{Tags and \newmodel\ protocols.}

An intra \newmodel\ protocol defined in Section~\ref{sec:rbmodel}
     executes a (infinite) sequence of phases, each consisting of a
     fixed number of rounds.
It is thus natural to annotate the code of an asynchronous protocol with a tag
     $(\mathtt{phase}, \mathtt{round})$.
In an inter \newmodel\ protocol,
     within a round, processes may call an inner \newmodel\ protocol.
Here, an instance of an inner round can  be identified by phase and
     round of the outer (calling) protocol, and phase and
     round of the inner protocol.
We are thus led in the following to consider tags that capture this
     structure:

We start with preliminary definitions.
Given two values $a\in (\mathcal{D}_A,\prec_A)$ and $b\in
     (\mathcal{D}_B,\prec_B)$, $\next(a,b) = (sa,sb)$ where (1) if $b$
     is the maximum value in $(\mathcal{D}_B,\prec_B)$, then $sb=0$
     where $0$ is the minimum value in $(\mathcal{D}_A,\prec_A)$ and
     $sa$ is the successor of $a$ w.r.t.
the order $\prec_A$, denoted $\next(a)$;
% ~\footnote{$a\prec_A sa$ such that for any $c\in (\mathcal{D}_A,\prec_A)$, $a \prec_A c$ implies $sa\prec_A c$}
(2) else $a=sa$ and $sb=\next(b)$ is the successor of $b$ in  $ (\mathcal{D}_B,\prec_B)$.

\begin{definition}[\newmodel\ synchronization tag]
\label{def:synctagCOMPHO}
Given a protocol \prog annotated with a synchronization tag $(\svars,\tags,\tagm)$,
the tag is called
{\em \newmodel\ synchronization tag} if $\svars$ has an even number of variables, i.e., $\svars=(v_1,v_2,\ldots, v_{2m-1}, v_{2m})$, such that each pair $(v_{2i-1},v_{2i})$ has a different type (at least on one of the components) and
\begin{itemize}
  \item $v_{2i}$ takes a constant number of values, forall $i$ in $[1,m]$,
  \item the monotonic increasing order is refined; for any local execution $\pi=s_0 A_0 s_1 A_1 \ldots \in \sem\prog\antic_{p}$ of a process $p$ in the semantics of $\prog$,
  $\sem\tags\antic_{s_0} \leq_{\next} \sem\tags\antic_{s_1}$ where
  $(a_1,a_2,\ldots, a_{2m-1}, a_{2m} )\leq{\next} (a'_1,a'_2,\ldots, a'_{2m-1}, a'_{2m})$  iff
  \begin{itemize}
    \item if $(a_{2i-1}',a_{2i}') \geq succ((a_{2i-1},a_{2i}))$ then
    $(a_{2j-1}',a_{2j}') = (a_{2j-1},a_{2j})$ for all $j<i$
    and $(a_{2j-1}',a_{2j}') = (\bot,\bot)$ forall  $j>i$.
    \end{itemize}

\end{itemize}
Further, if $(a_{2i-1}',a_{2i}') = (\next(a_{2i-1}),0)$ or $(a_{2i-1}',a_{2i}') = succ((a_{2i-1},a_{2i}))$, the tag is called \emph{incremental}.
%\begin{itemize}
%\item if $(a_{2i-1}',a_{2i}') = succ((a_{2i-1},a_{2i}))$, the tag is called \emph{strictly incremental}, and
%\item
%\end{itemize}

For every $1\leq i \leq m$, $v_{2i-1}$ is called a \emph{phase tag} and  $v_{2i}$ is called \emph{round tag}.
\end{definition}

% In the asynchronous protocols, one \newmodel\ protocol calling another
%      one corresponds to nesting loops.
% For instance, if the tag is $(a_1,a_2,a_3,a_4)$, then $a_1$ and $a_3$
%      correspond to the loop iterator of the outer loop and inner loop,
%      respectively.
% As they correspond to the phases we call $a_1$ and $a_3$ \emph{phase
%      tag}.
% Correspondingly, we call \emph{round tags} the elements $a_2$ and
%      $a_4$, which capture parts of the code within a loop that
%      correspond to a \newmodel\ round.

% !TEX root = head2.tex

\subsection{Verification of synchronization tags}
\label{ssec:verif}

Given a protocol \prog annotated with a (\newmodel-) tag
     $(\svars,\tags,\tagm)$, checking that the tag is a (\newmodel-)
     synchronization tag reduces to checking a reachability problem on
     the local code, that is, in a sequential system.

The non-sequential instructions are the sends and receives, appearing
     in Conditions~\ref{sync:send} and~\ref{sync:receive} of Definition~\ref{def:synctags}.
Checking that sent messages are tagged with the tag of the state they
     are sent in, that is Condition~\ref{sync:send}, reduces to checking equality between local
     variables: the components (tagged by \tagm) of a message type
     variable $m$ and the local variables associated by \tags at the
     control location that sends $m$.
Recall that $send$ does not modify the local state, so, it can be
     %omitted when checking the tag annotations, and 
     replace with an
     assert corresponding to the aforementioned equality.

Checking that messages with lower tags are dropped, that is,
     Condition~\ref{sync:receive}, is done by checking that the
     messages that are added to \texttt{mbox} have values (on the
     tagged components) greater than or equal to the local variables
     associated by $\tags$ at the control location where the addition
     occurs. 
We assume that $recv$ may return any message and we check that the filters 
that guard the message's addition to the mailbox respect the order relation w.r.t. the state tags. 
This is again expressed by a state property that relates message fields with tag variables.

Conditions~\ref{sync:sequence} and~\ref{sync:skip} in
     Def.~\ref{def:synctag} (and Def.~\ref{def:synctagCOMPHO}) translate into transition invariants over the synchronization
     variables. They state that the lexicographic order (monotonic or
     increasing) is preserved by any two consecutive assignments to
     the synchronization variables.

We automated these checks with the static verifier Verifast, and
     report in Section~\ref{sec:example} on our experiments.

%% ************************************************************************

%% ************************************************************************

\section{Reducing an asynchr.\ execution to its canonic form}
\label{ssec:reduction}

After having introduced synchronization tags, we now show that any
     execution of an asynchronous protocol that has a synchronization
     tag can be reduced to a canonic execution.
The proof proceeds in several steps, where in each step we will obtain
     a more restricted execution.
The steps are as follows:
\begin{description}

\item[Asynchronous executions.] We start with an asynchronous
     execution $\asyncexec{}\in \sem\prog\antic$ as defined in
     Section~\ref{ssec:async-sem}.
Due to asynchronous interleavings, an action at process $p$ that
     belongs to round~$k$ may come before an action at some other
     process $q$ in round~$k'$, for~$k' < k$.

\item[Big receive.] In order to capture jumping forward in
     rounds, we will regroup statements at different process to arrive
     at an asynchronous execution, where for each process a sequence
     of receive statements (followed by local computations ${\sf stm}$
     for a jump) appears in a block.
Thus, we can replace these blocks by a single atomic $Receive$.
The resulting executions we denote by~$\sem\prog_\mathsf{Rcv}\antic$.

\item[Monotonic executions.]
We reduce asynchronous executions with
Big receive semantics to execution where all tags are (non-strictly)
monotonically increasing. As a result, all actions for round~$k'$
appear before all actions for all rounds $k$, for~$k' < k$.

\item[Round-based executions.] We reduce
     monotonic executions to \newmodel\ executions as defined in
     Section~\ref{sec:rbmodel}.

\end{description}

In each step, we maintain the following important property between the
original execution and the execution we reduce to:

\begin{definition}[Indistinguishability]
Given two executions $\aexec$ and $\aexec'$ of a protocol \prog, we
     say a process $p$ cannot distinguish locally between $\aexec$ and
     $\aexec'$ w.r.t.\ a set of variables $W$, denoted
     $\aexec\simeq_p^W\aexec'$, if the projection of both executions
     on the sequence of states of $p$, restricted to the variables in
     $W$, agree up to finite stuttering, denoted, $\aexec\prj{p,W}
     \equiv \aexec'\prj{p,W}$.

Two executions $\aexec$ and $\aexec'$ are \emph{indistinguishable}
     w.r.t.\ a set of variables $W$, denoted $\aexec\simeq^W\aexec'$,
     iff no process can distinguish between them, i.e., $\forall
     p.\ \aexec\simeq_p^W\aexec'$.
\end{definition}

We focus on indistinguishability because it preserves  so-called local
     properties~\cite{Chaouch-SaadCM09}, or equivalently properties
     that are closed under local stuttering.
Important fault-tolerant distributed safety and liveness
     specifications fall into this class: consensus, state machine
     replication, primary back-up, k-set consensus,~etc.

\begin{definition}[Local properties]\label{def:local}
A property $\phi$ is \emph{local} if for any
two  executions $a$ and $b$ that are  indistuingishable
 $a \models \phi$ iff $b \models \phi$.
\end{definition}

In the following we will denote by $S_i$
     a global state and by $s_i(p)$ the local state of process $p$ in
     the global state $S_i$.

\paragraph{Reducing Asynchrony to Big receive.} \label{sec:bigreceive}

This reduction considers the receive statements.
If the local execution is of the form $\pi= \ldots s^p_i, receive_i,
     s^p_{i+1},$ $receive_{i+1}$,  $s^p_{i+1}, \ldots$, in the
     asynchronous execution, the two receive actions can be
     interleaved by actions of other processes.
Following the theory by Lipton~\cite{Lipton75}, all receive statements
     are right movers with respect to all other operations of other
     processes, as the corresponding send must always be to the left
     of the receive.
In this way, we reduce an asynchronous execution to one where local
     sequences of receives appear as block.
By the same argumentation, this block can be moved right until the
     first ${\sf stm}$ action of this process.
Again the resulting block can be moved to the right w.r.t.\ actions at
     other processes.
By repeating this argument, we get an asynchronous executions with
     blocks that consist of several receives (possibly just one
     receive) and ${\sf stm}$ statements such that at the end the
     local state tag matches the maximal received message tag, i.e., the process has jumped forward to a round from which it
     received a message.
We will subsume such a block by an (atomic) action $Receive$,
     and denote by $\sem\prog_\mathsf{Rcv}\antic$ the asynchronous
     semantics with the atomic Receive.

\paragraph{Reducing Big receive to monotonic.}\label{sec:monotonic}

\begin{theorem}
  Given a program $\prog$ if there is a synchronization
  tag $(\tags,\tagm)$ for $\prog$, then
  $\forall \asyncexec{}\in \sem\prog_\mathsf{Rcv}\antic$, if $\asyncexec{} = \ldots
  S_{i-1}, A_i^p,
  S_i, A_{i+i}^q, S_{i+i} \ldots$, and $\sem\tags\antic_{s_i(p)} >
  \sem\tags\antic_{s_{i+1}(q)}$, then $${\asyncexec{}}' = \ldots
  S_{i-1}, A_{i+i}^q,
  S'_i, A_i^p, S_{i+i} \ldots\in \sem\prog\antic.$$ Further ${\asyncexec{}}', {\asyncexec{}}$ are indistinguishable w.r.t. all protocol variables, i.e., ${\asyncexec{}}'\simeq{\asyncexec{}}'$.
% In $\asyncexec{}$ and ${\asyncexec{}}'$, all processes go through the
%   same sequence of local states, and for all messages, the order
%   between send and receive is equal.
\end{theorem}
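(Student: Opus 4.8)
The statement is a local-commutativity (mover) lemma: if two consecutive actions $A_i^p$ and $A_{i+1}^q$ at distinct processes $p\neq q$ appear in "out-of-time order" (the state tag of $p$ after $A_i^p$ strictly exceeds the state tag of $q$ after $A_{i+1}^q$), then we may swap them and still obtain a legal execution in $\sem{\prog}\antic$, indistinguishable with respect to all protocol variables. The plan is a case analysis on the type of the two actions, using the synchronization-tag conditions \ref{sync:sequence}--\ref{sync:skip} to rule out the dangerous combinations. First I would observe that since $p\neq q$, the only way $A_i^p$ and $A_{i+1}^q$ can fail to commute in the underlying asynchronous semantics is through the message pool $msg$: a send by $p$ whose message is consumed by a receive of $q$, or symmetrically. (Local state, program counters, and $\mathtt{in}/\mathtt{out}$ events at $p$ and $q$ are disjoint, so those trivially commute and preserve all process-local state sequences, giving indistinguishability for free.) So the heart of the argument is: \emph{no message sent by $A_i^p$ can be the one received by $A_{i+1}^q$ under the stated tag hypothesis.}

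Next I would make the tag bookkeeping precise. By Condition~\ref{sync:send}, a message $m$ sent by $p$ in state $s_i(p)$ satisfies $\sem\tagm\antic_m = \sem\tags\antic_{s_i(p)}$, and $p$'s tag is unchanged by the send; since after \textsf{Reducing Asynchrony to Big receive} the receive of $q$ is the atomic $Receive$ block, by Condition~\ref{sync:receive} every message it stores has tag $\geq \sem\tags\antic_{s_i(q)}$ (the tag at the start of the block), and by Condition~\ref{sync:sequence} the block's final state tag $\sem\tags\antic_{s_{i+1}(q)}$ is $\geq \sem\tags\antic_{s_i(q)}$ as well. If $A_{i+1}^q$'s $Receive$ were to consume $m$, we would need $\sem\tagm\antic_m \geq \sem\tags\antic_{s_i(q)}$, i.e. $\sem\tags\antic_{s_i(p)} \geq \sem\tags\antic_{s_i(q)}$; combined with the Big-receive invariant that $q$'s final tag matches the maximal received tag, this would force $\sem\tags\antic_{s_{i+1}(q)} \geq \sem\tagm\antic_m = \sem\tags\antic_{s_i(p)}$, contradicting the hypothesis $\sem\tags\antic_{s_i(p)} > \sem\tags\antic_{s_{i+1}(q)}$. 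Hence $m$ is still in transit after the swap (or was never sent if $A_i^p$ is not a send), and $A_{i+1}^q$ can be executed first from $S_{i-1}$, producing some intermediate $S_i'$; then $A_i^p$ is enabled at $S_i'$ because $p$'s local state and pc are untouched by $A_{i+1}^q$, and executing it restores exactly $S_{i+1}$ since the two actions' effects on $msg$ are independent (additions to a multiset commute, and removals are of distinct messages). I would dispatch the symmetric case ($A_i^p$ a receive, $A_{i+1}^q$ a send) the same way, and the remaining combinations (both sends, both receives, one a local/interface statement) by noting they act on disjoint data and the tag hypothesis plus Condition~\ref{sync:skip} already forbids, e.g., a local observable-state change at $p$ while $p$ holds a stale view. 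Finally, indistinguishability w.r.t.\ all protocol variables follows because the projection of the execution onto each process's state sequence is literally unchanged by the swap — only the interleaving order of two steps at \emph{different} processes changed — so $\aexec\prj{r,W} \equiv \aexec'\prj{r,W}$ for every process $r$ and $W = \vars$, up to (in fact without) stuttering.

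**Where the difficulty lies.** The routine part is the $msg$-multiset reasoning and the "disjoint footprints commute" argument; the subtle part is the bookkeeping around the $Receive$ block introduced by the Big-receive reduction. One must be careful that the relevant tag of $q$ in Condition~\ref{sync:receive} is the tag at the \emph{beginning} of the atomic block, while the hypothesis of the theorem mentions the tag of the \emph{end state} $s_{i+1}(q)$, and that the jump-forward statements ${\sf stm}$ inside the block (governed by Condition~\ref{sync:skip}) can only raise $q$'s tag up to the maximal received message tag and produce no observable side effects. The cleanest way to handle this is to prove, as a preliminary invariant of $\sem\prog_\mathsf{Rcv}\antic$, that for any atomic $Receive$ block the end-state tag equals $\max\{\sem\tagm\antic_m : m \text{ stored in the block}\}$ (or equals the start-state tag if nothing is stored), which is exactly the property the Big-receive construction was designed to enforce; with this in hand the contradiction in the previous paragraph goes through immediately. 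A second point requiring care is that after swapping, the new intermediate state $S_i'$ must be a \emph{bona fide} protocol state (well-typed $s'$, consistent $msg'$); this is immediate because $A_{i+1}^q$ is a legal transition from $S_{i-1}$'s restriction to $q$ and the rest of $S_{i-1}$ is carried along unchanged — but it is worth stating, since $\sem\prog\antic$ membership is what the theorem claims and it is the reason we need $A_{i+1}^q$ to be enabled already at $S_{i-1}$, not merely at $S_i$.
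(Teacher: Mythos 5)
Your proof is correct and follows essentially the same route as the paper's: isolate the case where $A_i^p$ is a send whose message is consumed by the $Receive$ of $q$, and refute it by combining Conditions~\ref{sync:send}, \ref{sync:receive} and~\ref{sync:skip} with the atomicity of the Big-receive block to force $\sem\tags\antic_{s_{i+1}(q)} \geq \sem\tagm\antic_{m} = \sem\tags\antic_{s_i(p)}$, contradicting the hypothesis, while all other action pairs commute on disjoint footprints. The only step you take for granted that the paper makes explicit is that $p\neq q$ is itself a consequence of the tag hypothesis together with Condition~\ref{sync:sequence} (local monotonicity), rather than an assumption.
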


\begin{proof}
From $\sem\tags\antic_{s_i(p)} >
  \sem\tags\antic_{s_{i+1}(q)}$ and \ref{sync:sequence}
follows that $p\ne q$, so that swapping
     cannot violate the local control flow.
As $p\ne q$, if $A_{i+i}^q$ is a send or a stm, the action at $p$ has
     no influence on the applicability of $A_{i+i}^q$ to $S_i$.
The only remaining case is that $A_{i+i}^q$ is a $Receive$. Only if $A_i^p$
     sends a message $m$ that is received in $A_{i+i}^q$, $A_{i+i}^q$
     cannot be moved to the left. We prove by contradiction that this is
     not the case:
By \ref{sync:send},  $\sem\tags\antic_{s_i(p)}= \sem\tagm\antic_{m}$.
By \ref{sync:receive} and \ref{sync:skip}, and the atomicity of
     Receive, $\sem\tags\antic_{s_{i+1}(q)} = \sem\tagm\antic_{m}$.
Thus, $\sem\tags\antic_{s_i(p)} = \sem\tags\antic_{s_{i+1}(q)}$ which
     provides the required contradiction to the assumption of the
     lemma   $\sem\tags\antic_{s_i(p)} >
     \sem\tags\antic_{s_{i+1}(q)}$.

The statement on indistinguishability follows
from the reduction.
\qed\end{proof}

By  inductive application of the theorem, we obtain:

\begin{corollary}
 Given a program $\prog$ if there is a synchronization
  tag $(\tags,\tagm)$ for $\prog$, then
  $\forall \asyncexec{}\in \sem\prog_\mathsf{Rcv}\antic$, there is a
monotonic
asynchronous
     execution $\seqcomp{\asyncexec{}}  = \ldots S_{i-1},$ $A_i^p,$ $S_i, A_{i+i}^q,
     S_{i+i} \ldots$, where for each $i$ and any two processes $p$ and
     $q$, $\sem\tags\antic_{s_i(p)} \le
     \sem\tags\antic_{s_{i+1}(q)}$.
\end{corollary}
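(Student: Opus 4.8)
The plan is to prove the corollary by iterating the preceding theorem, organised as a prefix-by-prefix bubble sort of the action sequence of $\asyncexec{}$ keyed by tags. First I would attach to every action $A_i^p$ its \emph{tag} $\tau(A_i^p) := \sem\tags\antic_{s_i(p)}$ and observe, exactly as in the theorem's proof, that $\tau$ is \emph{invariant} under the transpositions the theorem licenses: such a swap involves two \emph{distinct} processes, changes the local state of neither on any variable, and --- because a message whose tag lies strictly below the receiver's is never stored --- leaves unchanged the message a $Receive$ picks up; only the intermediate global state $S_i$ is replaced by $S_i'$. Hence the sequence $(\tau(A_i))_i$ is a fixed object, and by Condition~\ref{sync:sequence} the tags along the subsequence of actions of any single process are already non-decreasing --- and that subsequence is never reordered, since the theorem only swaps actions of different processes. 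I then \emph{define} the target execution $\seqcomp{\asyncexec{}}$ to be the \emph{stable sort} of $(A_i)_i$ by $\tau$: actions are permuted so that consecutive tags are non-decreasing, keeping the original relative order among any two actions of equal tag.

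To realise this permutation through adjacent transpositions I would proceed position by position. The crucial finiteness observation is that in a synchronization-tagged run every action has only finitely many actions of strictly smaller tag: $P$ is finite; each tag value labels only finitely many actions of each process (every process eventually leaves every round it enters); and in the lexicographic order every tag value has only finitely many smaller values (phase tags range over $\mathbb{N}$-like domains, round tags over finite ones). Therefore, for $m = 1,2,\ldots$ in turn, the action that belongs at position $m$ of $\seqcomp{\asyncexec{}}$ currently sits at a bounded position and can be brought there by finitely many adjacent swaps, each past an action of strictly larger tag --- i.e.\ each a descent $\tau(A_j) > \tau(A_{j+1})$, exactly the situation handled by the theorem, and each keeping us inside $\sem\prog_\mathsf{Rcv}\antic$; moreover these swaps only touch positions $\ge m$, so positions $1,\ldots,m$ (together with the states preceding them) are frozen forever once reached. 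This yields a sequence $\asyncexec{(0)} = \asyncexec{}, \asyncexec{(1)}, \ldots$ whose finite prefixes stabilise, with stable limit $\seqcomp{\asyncexec{}}$.

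Two things then remain. First, $\seqcomp{\asyncexec{}} \in \sem\prog_\mathsf{Rcv}\antic$: each $\asyncexec{(k)}$ is obtained from $\asyncexec{}$ by finitely many theorem-swaps and is therefore a Big-receive execution, and by the freezing property every finite prefix of $\seqcomp{\asyncexec{}}$ equals the corresponding prefix of some $\asyncexec{(k)}$; since being an execution is a safety property (a conjunction of constraints on finitely many consecutive entries), a sequence all of whose finite prefixes extend to executions is itself an execution. Second, $\seqcomp{\asyncexec{}} \simeq^W \asyncexec{}$ for $W$ the set of all protocol variables: by the theorem $\asyncexec{(k)} \simeq^W \asyncexec{}$ for every $k$, and since $\seqcomp{\asyncexec{}}$ agrees with $\asyncexec{(k)}$ on ever longer prefixes while $\simeq^W$ is ``equality up to finite stuttering of each process's local-state projection'', $\asyncexec{}\prj{p,W}$ and $\seqcomp{\asyncexec{}}\prj{p,W}$ coincide up to finite stuttering for every $p$, i.e.\ $\seqcomp{\asyncexec{}} \simeq^W \asyncexec{}$. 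Monotonicity of the tags along $\seqcomp{\asyncexec{}}$, i.e.\ $\sem\tags\antic_{s_i(p)} \le \sem\tags\antic_{s_{i+1}(q)}$ for consecutive actions, is then immediate from the definition of the stable sort.

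I expect the real obstacle to be the passage from finite to infinite executions: a literal ``inductive application of the theorem'' only sorts finite prefixes, and to obtain a single monotonic infinite execution one genuinely needs (i) the finiteness argument that makes every prefix of the sorted execution freeze after finitely many swaps --- which rests on no tag value recurring infinitely often, i.e.\ on every process eventually leaving every round it enters, a hypothesis guaranteed for fault-tolerant protocols by timeouts --- and (ii) the safety-closure argument that the prefix-limit of the intermediate executions is again a valid execution that is still indistinguishable from $\asyncexec{}$. Everything else (the mover direction, the commutation itself) is supplied by the theorem, so the write-up should concentrate on making these two limiting steps precise.
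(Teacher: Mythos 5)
Your proposal is correct and follows essentially the same route as the paper, which disposes of the corollary with the single line ``by inductive application of the theorem'': you iterate the preceding swap theorem to sort actions by tag, which is exactly the intended argument. Your additional care about the infinite-execution limit (prefix freezing and the safety-closure of the set of executions) addresses a subtlety the paper leaves implicit rather than taking a different approach.
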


The monotonic execution $\seqcomp{\asyncexec{}}$ is thus a sequential
     composition of actions of rounds in increasing order, that is, all
     actions of round~$k$ occur before all actions in round~$k+1$, for
     all~$k$.
Thus, the global state between the last round~$k$ action and the first
     round~$k+1$ action constitutes the boundary between these rounds.
In the following section we will show that we can simplify the
     reasoning within a round.

\paragraph{Reducing a Round to a Synchronous round.}\label{sec:redHO}

In order to reduce monotonic executions into $\mathit{HO}$ semantics we re-use
     arguments by~\citet{Chaouch-SaadCM09}, which we have to extend
     for asynchronous programs.
We consider distributed programs of a specific form: The local code
     within each round is structured in that first there are send, then
     Receive, and then other statements.
Similarly, we only consider protocols where it is sufficient to check
     states only when the tags change.
If the local code within a round is ``subsumed'' to a single
     local transition, we do not lose any observable events.
Rather, the subsumption is locally stutter equivalent to the original
     asynchronous semantics.

As we start from monotonic executions here, we can restrict ourselves
     to swapping actions within a round and only have to care about
     moving send and receive actions.
For this, we can use the arguments from  \cite{Chaouch-SaadCM09}: the
     send actions are left movers with respect to all other
     operations, Receive actions are left movers with all statements
     except sends.
By repeated application of their arguments, we arrive at executions
     where within a round all send actions come before all Receive
     actions, which come before all other actions.
We call these executions send-receive-compute executions:

\begin{proposition}
For each monotonic asynchronous execution, there exists an
     indistinguishable asynchronous send-receive-compute execution.
\end{proposition}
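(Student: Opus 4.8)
The plan is to follow a Lipton-style mover argument, but restricted to swaps \emph{within a single round}, since by the previous corollary the execution is already monotonic and so the actions of distinct rounds appear in increasing order as contiguous blocks. Thus it suffices to show that the block of actions belonging to a fixed round~$k$ can be reordered into the shape send-actions first, then $Receive$-actions, then all remaining $\mathsf{stm}$/\texttt{out}/\texttt{in} actions, while preserving indistinguishability with respect to all protocol variables.

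First I would record the mover properties we need, citing~\cite{Chaouch-SaadCM09}: a $send$ action of process~$p$ is a left mover with respect to any action of another process~$q$, because the only action that could causally depend on it is the $Receive$ that consumes the message, and swapping a $send$ to the left never invalidates such a later $Receive$ (it only enlarges the pool of in-transit messages available to it); and a $Receive$ action of process~$p$ is a left mover with respect to every action of another process except a $send$ of the corresponding message, because a $Receive$ only removes messages that are already in transit, and moving it left cannot make a message it consumed unavailable unless that message's $send$ is the action being swapped past. Within a \emph{single round} and under the Big-receive semantics $\sem{\prog}_\mathsf{Rcv}\antic$, a $Receive$ action and the matching $send$ carry, by Conditions~\ref{sync:send} and~\ref{sync:receive} together with the atomicity of $Receive$, the same tag $\sem\tagm\antic_m = \sem\tags\antic$, hence both belong to round~$k$; but in any monotonic execution the $send$ of a round-$k$ message necessarily occurs before the $Receive$ that consumes it (a receiver cannot have jumped ahead within the round, by Condition~\ref{sync:skip}), so the obstructing pattern never blocks a needed leftward move inside the block.

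Then I would run the reordering in two passes over the round-$k$ block. In the first pass, repeatedly move every $send$ action leftward past every preceding non-$send$ action (all such pairs commute by the left-mover property of $send$), until all round-$k$ $send$ actions form a prefix of the block. In the second pass, within the remainder of the block move every $Receive$ action leftward past every preceding $\mathsf{stm}$/observable action; the only potential obstruction\dash---a $send$ of the matching message\dash---has already been pushed into the prefix, so these swaps are always legal. After both passes the round-$k$ block has the required send/Receive/compute shape. Each individual swap preserves the set of executions of $\prog$ (it is a valid Lipton commutation) and leaves every process's local state sequence unchanged up to finite stuttering, so indistinguishability with respect to all protocol variables is maintained; applying this round by round for all~$k$ yields a send-receive-compute execution indistinguishable from the original.

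The main obstacle I expect is the careful justification that no \emph{needed} leftward move is blocked\dash---i.e., that whenever a $Receive$ at~$q$ would have to move past a $send$ at~$p$ to reach its target position, that $send$ either is already to its left or carries a strictly different (hence, by monotonicity, later) tag and therefore lies in a later round and outside the current block. This is exactly where Conditions~\ref{sync:send}, \ref{sync:receive}, and~\ref{sync:skip}, plus the atomicity granted by the Big-receive step, are essential, and it is the point at which the argument genuinely departs from the purely synchronous setting of~\cite{Chaouch-SaadCM09}, which is why the statement is phrased as starting from \emph{monotonic} rather than arbitrary executions.
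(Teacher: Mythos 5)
Your proposal is correct and follows essentially the same route as the paper: the paper's (very terse) justification is precisely the restriction to within-round swaps enabled by monotonicity, together with the mover properties from~\cite{Chaouch-SaadCM09} that sends are left movers with respect to all other operations and Receives are left movers with respect to everything except the matching send. Your two-pass reordering and the discussion of why the matching send never obstructs a needed move is a faithful (and somewhat more detailed) elaboration of the same argument.
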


All sends are non-interfering and can thus be ``accelerated'' or
     ``subsumed'' in one global send action.
As in \newmodel\ all messages sent in a round must be of identical
     payload type, the type to be sent in the subsumed action is the
     union of the payload types.
Similar for receive.
Here, the $\mathit{HO}$ sets are defined by the processes of which a process
     received messages in its receive operations.
If in the original execution~$\aexec$ process~$p$ jumped over
     round~$r$, there are no send, receive, and local computation
     actions for $p$ in $r$.
As we require in the $\mathit{HO}$ semantics that every process performs these
     steps in each round, we have to complete the execution with
     \texttt{nop} steps for the missing rounds.
As they do not change which messages are received in the asynchronous
     execution, and which local states the processes go through, we
     again remain stutter equivalent, and obtain.

\begin{proposition}
For each asynchronous send-receive-compute execution, there exists an
indistinguishable \newmodel\ execution $\syncexec{}$, where the
     messages received in a Receive statement correspond to the $\mathit{HO}$
     sets.
\end{proposition}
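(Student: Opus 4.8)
The plan is to read off the \newmodel\ execution directly from the block structure of the send-receive-compute execution, using the $\mathit{HO}$ sets to record exactly which messages were delivered, and padding with \texttt{nop} rounds wherever a process jumped forward.

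First I would fix a send-receive-compute execution $\aexec$, which exists and is indistinguishable from the original by the previous proposition, and recall that being monotonic its actions are already grouped by round in increasing order, with the send, $Receive$, and compute actions of a round occurring in that order. For each round~$r$ I define $\mathit{HO}(p,r)$ to be the set of processes~$q$ such that in~$\aexec$ process~$p$ stored in round~$r$ a message originating from~$q$. By Condition~\ref{sync:send} every such message carries the tag~$r$, and by Condition~\ref{sync:receive} together with the atomicity of $Receive$ and the Big-receive construction a process stores only messages for its current tag, so this is well defined. I then build $\syncexec{}$ round by round: the send block of round~$r$ becomes a single \textsc{Send} transition whose message map $m_p$ is exactly the set of messages~$p$ emits in round~$r$ of $\aexec$ (a legal \textsc{Send} transition, since the $\mathit{HO}$ assignment is unconstrained), and the $Receive$-then-compute block of round~$r$ becomes a single \textsc{Update} transition. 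By the definition of $\mathit{HO}(p,r)$, the mailbox $mbox_p=\{(q,t)\mid (q,t,p)\in msg \wedge q\in\mathit{HO}(p,r)\}$ computed by the \newmodel\ semantics coincides with the multiset of messages~$p$ actually received in round~$r$ of $\aexec$, so applying \update reproduces the same local-state change and the same observable events $o_p$.

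Second I would handle forward jumps. If a process~$p$ skips from round~$r$ to round~$r'$ (its Big-receive block moving its tag past rounds $r{+}1,\dots,r'{-}1$), then in $\aexec$ it has no send, receive, or compute action for those rounds; in $\syncexec{}$ I insert, for each skipped round~$k$, a \textsc{Send} step with $m_p=\emptyset$ and an \textsc{Update} step with $\mathit{HO}(p,k)=\emptyset$, hence $mbox_p=\emptyset$. By Condition~\ref{sync:skip}, as long as $p$'s tag is strictly below the tag of the message it is catching up to, no statement of~$p$ changes the observable state or performs a send or an $\mathtt{out}$, so these padded rounds are genuine \texttt{nop}s: $p$'s local state and its sequence of observable events are unchanged. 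Since a \newmodel\ execution requires every process to act in every round, this padding is exactly what is needed, and it only adds finite stuttering to $p$'s state sequence.

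Finally I would check indistinguishability: for every process~$p$, the projection of $\syncexec{}$ onto $p$'s local states equals the projection of $\aexec$ onto $p$'s local states up to finite stuttering — the \textsc{Send}/\textsc{Update} pair of a non-skipped round collapses $p$'s intra-round states into at most two states, and the padded rounds merely repeat $p$'s state — and the $\mathtt{in}$/$\mathtt{out}$ event sequences coincide; hence $\syncexec{}\simeq\aexec$ over all protocol variables, which together with the previous proposition gives $\syncexec{}\simeq\aexec{}$ back to the send-receive-compute execution and thus to $\pi$. The main obstacle is the jump handling: one must argue that spreading a single atomic Big-receive jump over several empty-$\mathit{HO}$ \newmodel\ rounds neither drops nor reorders any delivered message (messages tagged with a skipped round $r{<}k{<}r'$ would have been discarded anyway by Condition~\ref{sync:receive}) and preserves the absence of observable events during the catch-up, which is precisely what Condition~\ref{sync:skip} buys us.
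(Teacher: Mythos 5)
Your proposal is correct and follows essentially the same route as the paper: subsume each round's send block into a single \textsc{Send} transition, define $\mathit{HO}(p,r)$ as the set of senders whose messages $p$ actually stored in round~$r$, pad skipped rounds with \texttt{nop} (empty-$\mathit{HO}$) steps, and conclude stutter equivalence of the local state projections. The only detail the paper mentions that you omit is that, since all messages sent in a \newmodel\ round must share one payload type, the subsumed send uses the union of the payload types occurring in that round\dash---a minor technicality that does not affect the argument.
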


Following Definition~\ref{def:local}, local properties are those closed under
indistinguishability, so that we obtain the following theorem.

 \begin{theorem}\label{thm:ho}
 If there exists a synchronization tag
       $(\svars,\tags,\tagm)$ for $\prog$, then $\forall \asyncexec{}\in
       \sem\prog\antic$ there exists an \newmodel-execution
       $\syncexec{}$ that satisfies the same local properties.
  \end{theorem}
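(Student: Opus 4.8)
The plan is to assemble Theorem~\ref{thm:ho} from the four reduction steps already in place, each of which replaces an execution by an indistinguishable one, and then to invoke Definition~\ref{def:local}, which characterizes local properties as exactly those preserved under indistinguishability. No new commutation argument is needed; the work is in chaining the previously established steps and checking that indistinguishability composes.

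Concretely, fix $\asyncexec{}\in\sem\prog\antic$. First I would apply the \emph{Big receive} reduction of Section~\ref{sec:bigreceive}: using that receive statements are right movers in the sense of Lipton, and that\dash---by Conditions~\ref{sync:receive} and~\ref{sync:skip} of the synchronization tag\dash---a maximal block of receives followed by the jump statements ${\sf stm}$ ends in a state whose tag equals the maximal received message tag, one obtains $\asyncexec{}_1\in\sem\prog_\mathsf{Rcv}\antic$ with $\asyncexec{}\simeq\asyncexec{}_1$. Next, the Corollary of Section~\ref{sec:monotonic} (iterating the monotonic-reduction Theorem) yields a monotonic execution $\seqcomp{\asyncexec{}_1}$ with $\asyncexec{}_1\simeq\seqcomp{\asyncexec{}_1}$; there, Condition~\ref{sync:sequence} forces any out-of-order pair of actions to live at distinct processes, and Conditions~\ref{sync:send}--\ref{sync:skip} make them commute, the only non-trivial case being when the later action is a $Receive$. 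Then the first Proposition of Section~\ref{sec:redHO} turns $\seqcomp{\asyncexec{}_1}$ into an indistinguishable send-receive-compute execution by moving sends and $Receive$ actions left within each round, and the second Proposition subsumes, round by round, all sends into one global send, all receives into one global receive\dash---defining $\mathit{HO}(p,r)$ as the set of senders $p$ actually received from in round~$r$\dash---and all remaining local code into a single transition, padding skipped rounds with \texttt{nop} steps; this produces an \newmodel-execution $\syncexec{}$ indistinguishable from the send-receive-compute execution.

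Chaining these four links and using that stutter equivalence is an equivalence relation, in particular transitive, one gets $\asyncexec{}\simeq\syncexec{}$. By Definition~\ref{def:local}, for every local property $\phi$ we then have $\asyncexec{}\models\phi$ iff $\syncexec{}\models\phi$, i.e., $\syncexec{}$ satisfies exactly the same local properties as $\asyncexec{}$, which is the claim.

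The step deserving the most care\dash---and, I expect, the main obstacle\dash---is the Big receive reduction: one must argue that the mover arguments genuinely compress an \emph{arbitrarily long, data-dependent} sequence of receives-plus-jump-code into a single atomic $Receive$ while staying stutter equivalent on all protocol variables, and that Conditions~\ref{sync:receive}--\ref{sync:skip} are strong enough to pin down the state tag at the end of such a block\dash---in particular that no observable event or non-timestamp state change occurs before the tag has caught up. A secondary, more bookkeeping concern is to verify that each of the four steps preserves indistinguishability with respect to \emph{all} of $\vars$ (not merely observable events), so that transitivity applies uniformly, and that the forward jumps together with the \texttt{nop}-padding of the last step relate the executions only up to the stuttering tolerated by the definition of $\simeq$.
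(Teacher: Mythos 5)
Your proposal matches the paper's own argument: Theorem~\ref{thm:ho} is obtained there exactly by chaining the Big-receive reduction, the monotonicity theorem and its corollary, and the two propositions reducing to send-receive-compute and then to \newmodel{} executions, each step preserving indistinguishability, and finally invoking Definition~\ref{def:local}. Your added caveats about the atomicity of the $Receive$ block and about checking that indistinguishability is preserved with respect to all protocol variables are sensible points of care, but they do not depart from the paper's route.
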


%

%
 
% !TEX root = head2.tex
%\newpage
\section{Code to code rewriting of Asynchronous to \newmodel}
\label{sec:rewriting}

We introduce a rewriting algorithm $\algorw$ that takes as input an
     asynchronous protocol \prog annotated with a synchronization tag
     and either produces a (inter-procedural) \newmodel\ protocol,
     denoted \newmodel(\prog), whose executions are indistinguishable
     from the executions of \prog, or aborts.

\paragraph{Replacing reception loops with atomic mailbox definition.}
\label{ssec:recloops}

We consider asynchronous protocols where message reception is
     implemented in a distinguished loop, that we refer to as
     ``reception loop''.
A reception loop is a simple \texttt{while(true)} loop, that (1)
     contains \texttt{recv} statements, (2) writes only to variables
     of message type, or containers of message type objects, (3) the
     loop is exited either because of a timeout or because some
     condition  over the message type  variables wrote in the loop holds.
The algorithm in Fig.~\ref{fig:async-leader-election} has four
     reception loops, at lines~\ref{ln:startrcv}, \ref{ln:rcvleader},
     \ref{ln:folstartrcv}, and~\ref{ln:rcvfollower}.
The exit of a reception loop is typically cardinality constraints or
     timeout, e.g., \texttt{mbox$\rightarrow$size > n/2} or
     \texttt{mbox$\rightarrow$size == 1}.

A reception loop is replaced by havoc assignments of the message
     type/container of message type variables written by the loop.
The code following the  loop is left unchanged except in the following
     cases: (1) the boolean conditions that refer to a loop timeout
     are replaced by the negation of all the other conditions to
     exit the loop; (2) if the loop does not have a timeout exit, that
     is, processes wait until all required messages are received, the
     code following the loop is wrapped into an if statement, allowing
     its execution only if the loops exit condition holds.
%%      (this is the
%%      case of the algorithm Two Phase Commit ~\cite{Tcp}).
%Eliminating reception loops on the running example from
%     Fig~\ref{fig:async-leader-election} produces the code in
%      Fig.~\ref{fig:async-leader-election-no-rec-loops}.
In the rest of the section we consider only protocols whose reception
     loops have been replaced by havoc statements.
     % !TEX root = head2.tex

\begin{figure}[t]
\begin{minipage}{0.49\textwidth}
\begin{lstlisting}[frame=none, language=C]
if (coord() == me){
   ballot++;
   msg *m = create_msg(ballot,label,myid); 
   if (mbox!=0 && mbox->size ==1) {
  	 ballot = mbox->message->ballot; 
	 leader = mbox->message->sender;}}
	\end{lstlisting}
	\end{minipage}
\begin{minipage}{0.49\textwidth}
  \ContinueLineNumber
	\begin{lstlisting}[frame=none, language=C]
if !(coord() == me){
 ballot++; 
 if (mbox!=0 && mbox->size ==1) {
     ballot = mbox->message->ballot; 
     leader = mbox->message->sender;}}
\end{lstlisting}
\end{minipage}
\caption{Two blocks defining round \texttt{NewBallot} in the  protocol
  from Fig.~\ref{fig:async-leader-election}.}
\label{fig:async-leader-election-blocks}

\vspace{-3eX}
\end{figure}

\paragraph{Rewriting  protocols with incremental
  synchronization tags.}
\label{ssec:rewrite1}

Let \prog be a protocol consisting only of one loop annotated with an
     incremental \newmodel\ synchronization tag $(\ph,\rd)$.
The rewriting in this section builds a (intra-procedural)
     \newmodel\ protocol in two steps: (1)  each iteration of \prog's
     loop defines a phase and (2) the code of each phase (the
     loops body) is decomposed into rounds.

Phases are matched with loop iterations if 
%In order to soundly translate the loop's body into the code of a
%     phase, we check that each iteration corresponds exactly to one
%     phase.
%To this, we check that the variable 
$\ph$ (representing the phase
     number) is increased exactly once in each iteration to its
     successive value (like the loop counter).
     To this we assume the protocol verified for strengthened annotations regarding tags monotonicity:   
%This check is encoded as a transition invariant like the checking of
%     tag's monotonicity in Sec.~\ref{ssec:verif}.
the relation $\ph = old\_\ph +1$ is an invariant of the loop, where
     $old\_\ph$ is previous value of $\ph$.
%If the condition is not met the rewrite algorithm aborts.
%We discuss in the next sections how we deal with the case when the
%     loop counter does not match the phase tag in the cases the algorithm
%     jumps over phases.
If $\ph$ has initial value 1, then the phase number matches the iteration number.
Otherwise $\ph$ is shifted by a bounded value.
The communication closure induced by the tags (see Theorem~\ref{thm:ho}) ensures that two processes communicate only when they are in the same iteration.
Hence, it is sound to construct a phase $i$ by composing the $i$th loop iteration of all processes.
Within a phase it remains to locate the round boundaries.

A \newmodel\ synchronization tag, ensures that the round variable
     \rd\ takes a bounded number of values: in the running examples
     these values are \texttt{NewBallot} and \texttt{AckBallot}.
Round bounderies are defined by the beginning/end of a loop iteration
     and  the assignments to the round variable \rd.
%For example, line~\ref{ln:startNewBallot}  marks the beginning of a
%     \texttt{NewBallot} round and line~\ref{ln:startAckBallotL} and
%     line~\ref{ln:startAckBallotF} mark the beginning of an
%     \texttt{AckBallot} round.

Processes can have different behaviors in the same round, depending on
     their local state and the messages received, although they
     execute the same code and go through the same sequence of rounds.
For example, in the round \texttt{NewBallot} only the processes
     designated coordinators by the oracle send a message.
Similarly, in the \texttt{AckBallot} round only the processes that
     received a message in this round are going to update their logs.
As usual these different behaviors are captured by branching
     instructions in the loops's body, and each path in the loop's
     body identifies a possible process behavior in sequence of
     rounds.

For each value $\ell$ of \rd, to compute the code of round $\ell$, we consider each path $\pi$ in the control flow graph of the loop's body and we identify
(1)  a block of instructions (possibly empty) $B_{\ell}^\pi$: a sequence of instruction in $\pi$ that starts with $\rd = \ell$ and ends with the  instructions preceding the next assignment to \rd;
(2)  the context under which each block $B_\ell^\pi$ is executed, that is a condition $cond_\ell^\pi$ that is the conjunction of all the branches leading to $rd=\ell$ on the path $\pi$.
The $B_\ell$ is the sequential composition of all \texttt{if ($cond_\ell^\pi$) $B_\ell^\pi$} with $\pi$ path in the control flow.
Fig.~\ref{fig:async-leader-election-blocks} shows the two blocks defining round NewBallot, corresponding to the leader follower paths in the control flow graph. 

To maintain the context in which a sequence of instructions is executed, i.e., $cond^\path_\ell$, we introduce auxiliary variables.
For each variable $\texttt{x}$ in a conditional we introduce an auxiliary variable $old\_x$ (of the same type with $x$), that is assigned only once to $\texttt{x}$, i.e, $old\_x := \texttt{x}$, before the condition is evaluated.
The conditionals $cond^\path_\ell$ are defined over these auxiliary variables.
If the variable is not read without being first assigned to a default value, we can abstract  $old\_x$ to boolean. This is the case of all our benchmarks, where auxiliary variables remember values of the mailbox in previous rounds.

Moreover, if the values of the round variables do not take all values in their domain, 
each condition $cond_\ell^\pi$ is
     conjuncted with the check whether the round number of the
     \newmodel\ equals the round tag variable. 
     Intuitively, if the check fails, the asynchronous code has set the
     round tag to a future round (of the same phase), which results in skipping the
     \newmodel\ round.

Finally, the code of every round, that is, $B_\ell^\pi$ is split into a
     $Send_\ell^\pi$ block, consisting of all send statements $B_\ell^\pi$ guarded by the conditionals preceding them and 
     an $Update_\ell^\pi$ block that contains the rest of the code in $B_\ell^\pi$ except the mbox's havoc. 
     We assume $Update_\ell^\pi$ contains no send, no recv, and no assignments to message type variables. 
     Otherwise the rewriting aborts. (One could try compiler optimization techniques to reorder instructions towards the imposed order.)  
%This is possible if each $B_\ell^\pi$ is a concatenation of a
%     (potentially empty) send block, where processes only send messages, a havoc statement over the mailbox, and an update
%     block that recode
%%The block $Send_i^\pi$ is a sequence of guarded send statements,
%%     $Mbox_i^\pi$ is a sequence of havoc assignments, and $Upd_i^\pi$
%%     contains no send, no recv, and no assignments to message type
%%     variables.
%If such blocks cannot be identified the rewriting aborts.
%This may happen if there is a send after a havoc or after a write to
%     some local variable.
%Then in the asynchronous code message chains of lengths 2 are
%     possible.

The rewriting eliminates the phase and round tag variables (if no rounds are skipped) from the
     local process variables all program locations are tagged with the
     same variables.
Reads of these variables are replaced with reads of the round,
     respectively phase number, of the \newmodel\ protocol.

\begin{example}
   For example the asynchronous protocol in Fig.~\ref{fig:async-leader-election} in Sec.~\ref{sec:glance} is rewritten into an intra-procedural \newmodel\ one, given in Fig.~\ref{fig:sync-leader-election} in Sec.~\ref{sec:glance}. However Fig.~\ref{fig:sync-leader-election} contains a simplification w.r.t. what is automatically generated. The code between the lines~\ref{ln:ackstart} to~\ref{ln:ackend} appears twice, ones if the process is leader and ones if it is not. Similar for the first round. 
\end{example}

An asynchronous protocol is structured, if receptions loops are emphasized as defined in Sec.~\ref{ssec:recloops} and the blocks associated with a round are a sequence of send followed by update statements. 

\begin{theorem}
Given a structured asynchronous protocol $\prog^{async}$ consisting of only one loop, that is
 annotated with a strictly incremental \newmodel\ synchronization tag  $(\svars,\tags,\tagm)$ of size two,
       $\svars=\{ \mathtt{phase\_tag}, \mathtt{phase\_tag}\}$,
\algorw builds an intra-procedural \newmodel\ protocol $\prog^{\newmodel}$  whose executions are indistinguishable from the executions of $\prog^{async}$.
The resulting protocol has only one phase that consists of as many rounds as the domain of evaluation of the round\_tag.
$\prog^{\newmodel}$ sends exactly the same messages as $\prog^{async}$.
\label{th:rewrite1}
\end{theorem}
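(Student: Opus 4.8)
The plan is to reduce the general claim to a step-by-step simulation between the \emph{monotonic} asynchronous executions of $\prog^{async}$ and the $\newmodel$ executions of $\prog^{\newmodel}=\algorw(\prog^{async})$, and then transfer the result to arbitrary asynchronous executions using the reductions of Section~\ref{ssec:reduction}. First I would take an arbitrary $\asyncexec{}\in\sem{\prog^{async}}\antic$ and, via the big-receive and monotonicity reductions of Section~\ref{ssec:reduction}, replace it by an indistinguishable monotonic execution $\seqcomp{\asyncexec{}}$ in which every action of round $k$ (the $k$-th tag value in the $\leq_{\next}$ order) precedes every action of round $k+1$; since indistinguishability is transitive and preserves all local properties, it then suffices to exhibit a $\newmodel$ execution of $\prog^{\newmodel}$ indistinguishable from $\seqcomp{\asyncexec{}}$. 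Because $\svars=(\ph,\rd)$ is \emph{strictly incremental} of size two, successive tag values along any local run are exactly $\next((\ph,\rd))$, so $\ph$ increases by one per loop iteration (this is the invariant $\ph = old\_\ph + 1$ that $\algorw$ assumes verified) while $\rd$ cycles through its constant-size domain inside the body. Hence the $i$-th loop iteration of \emph{every} process corresponds to phase $i$ of $\prog^{\newmodel}$ (up to the fixed initial shift of $\ph$), and the $\rd$-assignments cut the body of that iteration into exactly $|\mathit{dom}(\rd)|$ rounds --- giving the ``one phase, $|\mathit{dom}(\rd)|$ rounds'' shape claimed.

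The core of the argument is that the block construction of $\algorw$ is semantics-preserving within one round. For a round value $\ell$, each control-flow path $\pi$ of the loop body yields a block $B_\ell^\pi$ together with a context $cond_\ell^\pi$ (the conjunction of the branch decisions on $\pi$ leading to $\rd=\ell$), and $B_\ell$ is the sequential composition $\texttt{if}(cond_\ell^{\pi_1})\,B_\ell^{\pi_1};\dots;\texttt{if}(cond_\ell^{\pi_k})\,B_\ell^{\pi_k}$. The key lemma I would prove, by induction along a path, is that evaluated over the auxiliary snapshots $old\_x$ the contexts $cond_\ell^\pi$ are pairwise exclusive and exhaustive, and each $cond_\ell^\pi$ equals the original branch decisions on $\pi$; therefore in any reachable state exactly one disjunct fires and $B_\ell$ behaves as the corresponding $B_\ell^\pi$ --- this is exactly what the once-only assignments $old\_x:=x$ placed before each test are there to guarantee. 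The split of $B_\ell^\pi$ into $Send_\ell^\pi$ (the guarded sends) and $Update_\ell^\pi$ (the rest, minus the mailbox havoc) is justified by the send--receive--compute normal form of Section~\ref{sec:redHO}: sends are left movers, so floating all of a round's sends to the front changes no local state, which is precisely what putting them in the $\send$ method does, and the side condition that $Update_\ell^\pi$ contains no send, recv, or message-type assignment (on pain of $\algorw$ aborting) is what makes this reordering legal round-internally. Since $\prog^{async}$ is \emph{structured} (reception loops already replaced by a mailbox havoc, blocks being send-then-update), in the $\newmodel$ semantics the mailbox is instead supplied by \textsc{Update} from the non-deterministic $\mathit{HO}$ set; I would match the two non-determinisms by letting $\mathit{HO}(p,r)$ be exactly the senders whose messages $p$ keeps in $\seqcomp{\asyncexec{}}$. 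A forward jump --- where $\seqcomp{\asyncexec{}}$ has $p$ skip a round with no send/recv/stm and a stuttered state by Condition~\ref{sync:skip} --- is handled by the extra conjunct ``$\newmodel$ round number $=$ round tag'' that $\algorw$ adds to every $cond_\ell^\pi$: this makes all contexts false, so the generated round is a \texttt{nop} for $p$, matching exactly the stuttering \texttt{nop} steps inserted when completing a send--receive--compute execution into a $\newmodel$ one.

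Putting the pieces together, I would define the simulation relation at each $\send$/update boundary of round $r$: the $\prog^{\newmodel}$ state is related to the $\seqcomp{\asyncexec{}}$ state at the corresponding round boundary when the two agree on all non-auxiliary program variables (the phase and round tags of $\prog^{async}$ being replaced in $\prog^{\newmodel}$ by the phase and round counters) and the $old\_x$ hold the right snapshots; then \textsc{Start}, \textsc{Send} and \textsc{Update} transitions are matched one-for-one by the corresponding segments of $\seqcomp{\asyncexec{}}$, with all internal steps absorbed as finite stuttering. Because the only observable events are $\mathtt{in}/\mathtt{out}$ and these sit unchanged inside the $Update$ blocks, the two executions agree up to finite local stuttering, i.e.\ they are indistinguishable; combined with the reductions above this gives, for every $\asyncexec{}\in\sem{\prog^{async}}\antic$, an indistinguishable execution of $\prog^{\newmodel}$, and the converse direction follows by reading the same relation backwards (choosing the matching $\mathit{HO}$ sets and message losses). ``$\prog^{\newmodel}$ sends exactly the same messages'' is then immediate: $\algorw$ never rewrites a send statement, it only duplicates it under the exhaustive, mutually exclusive guards $cond_\ell^\pi$, and Condition~\ref{sync:send} already forces each message to carry its tag, i.e.\ the $(\ph,\rd)$ value, which in $\prog^{\newmodel}$ is precisely the (phase, round) at which the send occurs. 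I expect the main obstacle to be the faithfulness of the context reconstruction together with the reception-loop abstraction: one must check carefully that havocking the mailbox variables in $\prog^{async}$ and then re-reading them after a $\rd$-assignment cannot leak information that the $old\_x$ snapshots fail to record, and that the abort conditions of $\algorw$ (no send/recv/message-type write in an $Update$ block, $\ph$ incremented exactly once) are exactly strong enough to license both the round-internal reordering and the phase/iteration correspondence.
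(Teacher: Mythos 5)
Your proposal is correct and follows essentially the route the paper intends: the paper states this theorem without an explicit proof, relying on the reductions of Section~\ref{ssec:reduction} (big-receive, monotonicity, send--receive--compute, Theorem~\ref{thm:ho}) together with the block/context construction of $\algorw$, which is exactly the combination you spell out. Your added key lemma (mutual exclusivity and exhaustiveness of the contexts $cond_\ell^\pi$ over the $old\_x$ snapshots) and the explicit simulation relation at round boundaries are faithful elaborations of what the paper leaves implicit, and your closing caveats correctly identify where the argument actually depends on the abort conditions of $\algorw$.
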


\paragraph{Jumping over phases.}
\label{ssec:rewrite4}

The catch up mechanism allows processes to receive messages from
     future rounds, which leads to a jump to the received phase
     number.
Moreover, in general non-incremental tags allow processes to skip to
     future tags (which may happen, e.g., if the leader of the current
     phase is suspected to have crashed).
In this section, we reduce the problem of rewriting a protocol
     with non-incremental tags to the rewriting a protocol
     with incremental tags.

In Sec.~\ref{ssec:rewrite1} the loop counter coincides (modulo an
     initial shift) with the phase tag.
Jumping over phases potentially increases the phase tag by more than one,
     ``desynchronizing'' it from the loop counter.
To apply the rewriting from Sec.~\ref{ssec:rewrite3} we introduce
     empty loop iterations, when the loop counter is smaller than the
     phase tag, and we reinterpret the initial increasing tags over
     the new loop counter, resulting into an incremental tag
     annotation.

%% We consider protocol where one loop's iteration corresponds to the execution of (1) (at least) one to potentially unbounded many phases: there is no bound on the phase difference between two processes, therefore a process may jump an unbounded number of phases, and
%%     (2) no phase spreads over two loop iterations: each path of the loop's body ends with an incremental assignment to the phase tag.

First we identify the jumping control locations.
These are locations where the phase tag (1) is assigned a value that
     depends on the mailbox and (2) the communication closure checks
     show that a message tag in the mailbox may be strictly greater than
     local tag; cf.~Section~\ref{ssec:verif}.
In this case the tool partitions the path with the jumping instruction into the three
sequences of instructions
\texttt{Before\_Jump}, \texttt{Jump}, and \texttt{After\_Jump}. 
Since jumps are conditional, we have to capture the cases without jump,
     where \texttt{Before\_Jump} and \texttt{After\_Jump} are both part of
     a round and the case with jump where \texttt{Before\_Jump} and
     \texttt{After\_Jump} are parts of code for different rounds.
The rewriting encodes this cases with an auxiliary boolean variable
     that non-deterministically flags a jump, and a \texttt{continue}
     statement before the jumping instruction.  

In all examples we explored, \texttt{Before\_Jump} is either empty or
     consists only of one \texttt{send} instruction.
Both cases are simpler and correspond to no code being execution in
     \newmodel\ semantics, and it is naturally captured by empty HO
     sets there (in case there are send instruction the 
     messages can always be lost).

%% Condition (2) can be relaxed, that is the phase tag can remain the same across multiple iterations, with the condition that the process does not send or modify its local state in those iterations that don't increment the phase counter.

%% The existence of non-incremental increases of the phase tag requires explicit condition over the phase number and the new iteration counter. In this case the local state dictates when the to stop jumping phases and not the network communication. 

%% The proof that the modified protocol is equivalent with the initial one is straight forward (given the same initial state and the same values for the havoc variables it reaches the same set of states as the new obtained one).

%% The couple $\texttt{(iter,round\_tag)}$ is a synchronization tag for the new
%%      code, where the new phase tag and (old) round tag go though a
%%      consecutive values like in Sec~\ref{ssec:rewrite3}.
%% Therefore we apply the same algorithm to rewrite it into \newmodel\
%%      syntax.

\paragraph{Protocols with nested loops.}
     \label{ssec:rewrite3}

Let us consider a protocol \prog without reception loops.
The rewriting algorithm proceeds bottom-up: it starts rewriting the
     most inner loop using the procedure above.
For each outer loop it first replaces the nested loop with a call to
     the computed \newmodel\ protocol, and then applies the same
     rewriting procedure.
Since we considered passing by value procedure calls in the \newmodel\
     semantics, all local variables are input parameters.

Inner loops appearing on different branches may belong to
     the same sub-protocol; in other words these different loops
     exchange messages.
If  $\tags$ associates different synchronization variables to
     different loops then the rewriting builds one (sub-)protocol for
     each loop.
Otherwise, the rewriting merges the loops tagged with the same
     synchronization variables into one \newmodel\ protocol.

%% For example, after electing a leader, processes may continue sending
%%      different types of messages, used by the leader to broadcast
%%      commands to its followers and by followers to acknowledge them.
%% These actions are executed in a loop (on each brach leader/follower)
%%      until the followers stop receiving messages from the leader or
%%      vice-versa.

To soundly merge several loops into the same \newmodel\ protocol, the
     rewrite algorithm  identifies the context in which the inner loop
     is executed.
%% The code of broadcast will distinguish the leader follower roles, even
%%      if they don't appear explicit in the inner loop body, but they
%%      are part of the loops's execution context.

     \begin{theorem}
       Given a structured asynchronous program $\prog^{async}$  with a \newmodel\  synchronization tagging function $(\svars,\tags,\tagm)$, then $\algorw$ applied $\prog^{async}$ returns an inter-procedural \newmodel\ protocol $\prog^\mathit{\newmodel}$ whose executions are indistinguishable from the executions of $\prog^{async}$.
     \end{theorem}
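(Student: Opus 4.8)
The plan is to prove the statement by structural induction on the nesting depth $m$ of the loops of $\prog^{async}$, which by Definition~\ref{def:synctagCOMPHO} equals the number of phase/round tag pairs in $\svars$. At each level I would combine two ingredients that are already available: the semantic reduction of Section~\ref{ssec:reduction}, culminating in Theorem~\ref{thm:ho}, which guarantees that every asynchronous execution has an indistinguishable \newmodel\ counterpart, and the syntactic constructions of Sections~\ref{ssec:rewrite1}, \ref{ssec:rewrite4} and~\ref{ssec:rewrite3}, which compute that counterpart. I would use throughout that $\simeq^W$ is an equivalence relation — transitivity coming from transitivity of finite-stuttering equivalence on the per-process projections $\aexec\prj{p,W}$ — and that it is local, so it composes along the successive transformations carried out by \algorw.

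\textbf{Base case ($m=1$).} Here $\prog^{async}$ is a single loop carrying a \newmodel\ synchronization tag of size two. If that tag is not incremental, I would first apply the transformation of Section~\ref{ssec:rewrite4}: split each jumping control location into \texttt{Before\_Jump}/\texttt{Jump}/\texttt{After\_Jump}, insert the non-deterministic jump flag together with a \texttt{continue}, and pad empty loop iterations so that the loop counter tracks the phase tag. I would then check that the result is a protocol with an \emph{incremental} tag whose executions are indistinguishable from those of $\prog^{async}$ with respect to the original variables: the padded iterations have empty bodies, issue no \texttt{send}, \texttt{recv}, \texttt{in} or \texttt{out}, and hence contribute only finite stuttering. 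With an incremental size-two tag in hand, Theorem~\ref{th:rewrite1} applies directly and returns the intra-procedural $\prog^{\newmodel}$ (sending exactly the same messages); composing the two steps settles the base case.

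\textbf{Inductive step.} Assuming the claim for all structured protocols of depth $<m$, I would follow the bottom-up strategy of Section~\ref{ssec:rewrite3}. Fix an outermost loop and the finitely many inner loops on its branches. Inner loops carrying the same synchronization variables must be merged: for each I would record the branch condition $cond^{\pi}$ under which it is reached and form the guarded composition exactly as round blocks are formed in Section~\ref{ssec:rewrite1}. The synchronization-tag discipline guarantees that loops sharing tag variables form a single communication-closed unit exchanging no messages with the rest of the protocol, so the merged object is a well-formed structured protocol of depth $<m$ over its own variables, and the induction hypothesis yields a \newmodel\ sub-protocol with indistinguishable executions. Replacing each merged inner loop by a distributed procedure call to its sub-protocol turns the outer loop into a single-loop protocol over a statement alphabet extended with a ``call'' action, to which the base-case construction applies, producing the main \newmodel\ protocol with the call sites realized as \newmodel\ calls. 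To conclude indistinguishability I would splice executions: given a \newmodel\ execution of $\prog^{\newmodel}$, the outer base case together with Theorem~\ref{thm:ho} relate the rounds outside the calls to an asynchronous execution, the inductive hypothesis relates each call instance to an asynchronous sub-execution, and because the \newmodel\ call semantics blocks precisely the calling processes (the others stutter) and returns synchronously, the call boundaries align with the reduction points of Section~\ref{ssec:reduction}; locality of $\simeq^W$ then lets the pieces be pasted into one indistinguishable asynchronous execution, and symmetrically in the converse direction.

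\textbf{Main obstacle.} I expect the hard part to be the splicing across procedure-call boundaries in the inductive step when only a subset of processes enters a sub-protocol. One must show that, after applying the reduction of Section~\ref{ssec:reduction} to $\prog^{async}$, the processes that are ``inside'' a given instance of an inner loop and the rounds they run there coincide with the caller set and the rounds of the corresponding \newmodel\ call — i.e.\ that the shared synchronization variables genuinely isolate each sub-protocol instance, so non-callers neither send into it nor receive from it. The second delicate point is that merging several syntactically distinct asynchronous loops under their guards neither creates nor destroys executions; this again rests on loops with the same synchronization variables forming a single communication-closed unit whose participants in each round are exactly those satisfying the recorded context condition, and on the fact that the generated \newmodel\ code guards each round body by equality of the \newmodel\ round number with the round tag, so a skipped round performs no update. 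Once these facts are established, the remainder — threading local variables as value parameters, eliminating the now-redundant tag variables in favour of \newmodel\ round and phase numbers, and verifying applicability of the base-case construction over the extended alphabet — should be routine bookkeeping.
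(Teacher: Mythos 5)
Your proposal follows essentially the same route as the paper: the paper offers no explicit proof of this theorem, relying instead on the constructions of Sections~\ref{ssec:rewrite1}--\ref{ssec:rewrite3} (reduce non-incremental tags to incremental ones, apply Theorem~\ref{th:rewrite1} to single loops, then proceed bottom-up over nested loops, with indistinguishability inherited from Theorem~\ref{thm:ho} and transitivity of stuttering equivalence), which is exactly the induction you organize. Your explicit treatment of the splicing across procedure-call boundaries and of the merging of inner loops sharing synchronization variables actually supplies detail the paper leaves implicit, and the concerns you flag there are genuine but consistent with the paper's (unproven) claims.
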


\section{Experimental results}
\label{sec:example}
\newcommand{\tool}{\texttt{Translate\_to\_Sync}}

We implemented the rewriting procedure in a prototype tool and applied
the tool to several fault-tolerant distributed protocols.
The tool is available
\href{https://github.com/alexandrumc/async-to-sync-translation}
     {online}.\footnote{
       \url{https://github.com/alexandrumc/async-to-sync-translation}}
Fig.~\ref{fig:examples} summarizes our experimental results.

\paragraph{Verification of synchronization tags.}

\begin{wrapfigure}{r}{0.49\textwidth}
\vspace{-4eX}
\begin{lstlisting}[frame=none, language=C]
predicate tag_leq(int old_ballot, int old_label, int ballot, int label) = (old_ballot < ballot) || (old_ballot ==ballot && old_label<=label) ;
 predicate mbox_tag_eq(int ballot, int round, struct List* n;) =
 n == 0 ? true :
 n->message |-> ?msg &*& msg->ballot |-> ?v &*& msg->label |-> ?r  &*& msg->sender |-> _  &*&
 malloc_block_Msg(msg) &*& malloc_block_List(n) &*& n->next |-> ?next &*& n->size |-> ?s &*&
 n!=next &*&  ballot== v &*&  round== r &*& mbox_tag_eq(ballot,round, next) ;
\end{lstlisting}
\caption{Predicates in separation logic expressing the order relation over tags and the condition that all messages in the mailbox are timestamped with the receiver's local time.}
\vspace{-3eX}
\label{fig:preds}
\end{wrapfigure}

The tool takes protocols in a C embedding of the language
     from Sec.~\ref{ssec:async-sem} as input.
We use a C embedding to be able to use Verifast~\cite{verifast} for
checking the conditions in Sec.~\ref{ssec:verif}, i.e.,
the communication closure of an asynchronous protocol.
Verifast is a deductive verification tool based on separation logic for
     sequential programs.
The C embedding uses  the prototype of the functions \texttt{send}
     and \texttt{receive} (we assume their semantics is the one in
     Sec.~\ref{ssec:async-sem}).

%predicate mbox_geq_1(int ballot, struct List* l, int len) =
% (l == 0 && len == 0) ?
% true : len == 1 &*& l->message|-> ?msg &*& l->next |-> NULL &*& l->size |-> 1 &*& malloc_block_Msg(msg) &*& msg->label |-> _ &*& msg->ballot |-> ?v &*& msg->sender |-> ?s2 &*& malloc_block_List(l) &*& ballot <= v ? true : false ;

The user specifies in a configuration file the synchronization tag by
     (i) defining the number of (nested) protocols in the input file, (ii)
     for each protocol, the phase and round variables,  and (iii) for each
     messages type the fields that encode the timestamp, i.e., the phase
     and round number.
     Fig.~\ref{fig:examples} gives the names of phase and round
     variables of published protocols we use as benchmarks.

The tool expects the input file to be annotated with assert statements
for checking the conditions in Definition~\ref{def:synctags} w.r.t.\
the tags given in the configuration file and the auxiliary annotations
     Verifast needs to prove these asserts (inductive invariants).
The annotations are defined over program variables and auxiliary
     history variables.
Auxiliary variables are necessary to encode the monotonicity of the
     tags.
The tool calls Verifast and checks that the input contains \texttt{assert
     tag\_leq(old$\_$ballot, old$\_$label, ballot, label)} after each
     (pair of) assignment(s) of the phase and round variables.
Observe that conditions \ref{sync:sequence}--\ref{sync:receive} in Definition~\ref{def:synctags}
      are numeric constraints over the phase and
     round variables, and several other tools might verify them.
However, condition \ref{sync:skip} requires reasoning about the content of the
     mailbox, a potentially unbounded data structure.
Here is where we used the strength of Verifast, to reason about dynamically allocated data
     structures.
The size of the program annotated with the proofs for the asserts is given in LoC in the column ``Annot.'' in Fig.~\ref{fig:examples}.
If all the checks are passed, then the rewriting proceeds, otherwise
the tool outputs a warning.

     % !TEX root = head2.tex

\begin{figure}[t]
  {\small
\begin{center}
  \begin{tabular}{l|| l | c | c | c}
    \hline
    Protocol & Tags & Annot.  & Async & Sync\\
  %   & & Annot.\ &
     %&&LoC & LoC &LoC
      \hline\hline
%        Consensus ~\cite[Fig.5]{ChandraT96} &
%        {\small \texttt{\begin{tabular}{l}
%                 $(\ph1,\rd1): \{(1,{\sf Phase1})$,
%                 $(1,{\sf Phase 2})\}$,\\
%                  $(\ph2,\rd2) = (\texttt{r},1)$\\[-7mm]
%                  \end{tabular}}} &
%        Classic~\cite{EF82}&
%        2 &
%        Sec.~\ref{ssec:rewrite3} \\ \hline
        %%%%%
        {\small \begin{tabular}{l}
                  Consensus\\
                 \cite[Fig.6]{ChandraT96} \\
        \end{tabular}}&
               {\small \begin{tabular}{l}
                   $\ph$ = ${r}_p$ \\
                   $\rd$= \{{Phase1, Phase2}, {Phase3, Phase4}\}\\
                  \end{tabular}} &
        661 &
        332 & 251
        \\ \hline
        %%%%%
        Two phase commit &
        {\small \begin{tabular}{l}
                 $\ph$ = {i},  \\
                 $\rd$= \{{Query, Vote,}
                  {Commit, Ack}\}\\
                  \end{tabular}} &
        588 &
        252 &
        242 \\ \hline
        %%%%%%%%%%%%%
%        Alternating bit protocol &
%        &
%        Classic &
%        1 &
%        Sec.~\ref{ssec:rewrite1} \\ \hline
        %%%%%%%%%%%%%
    {\small \begin{tabular}{l}
                 Figure~\ref{fig:async-leader-election}$^{*,V}$ \\
       \end{tabular}} &
    {\small \begin{tabular}{l}
                 $\ph$ = {ballot},\\
                 $\rd$ = \{NewBallot, AckBallot\} \\
                 \end{tabular}} &
       650 & 255 & 110
       \\ \hline
        %%%%%%%%%%%%%
        ViewChange$^*$~\cite{OkiL88} &
        {\small \begin{tabular}{l}
                 $\ph1$ = $\texttt{view},$ \\
                 $\rd1$ = \{StartViewChange,  \\
                \quad\quad\quad\quad DoViewChange,
                 StartView\}\\
        \end{tabular}} & 720 &
         352 &
        172 \\ \hline
        %%%%%%%%%%%%%
        \begin{tabular}{l}
                 Normal-Op$^{V}$~\cite{OkiL88}\\
        \end{tabular}
         &
       \begin{tabular}{l}
                 $\ph$ = op\_number\\
                 $\rd$ = \{Prepare,
                  PrepareOK, Commit\}
                  \end{tabular} &
         628  &
        266&
       182 \\ \hline
        Multi-Paxos$^{*,V}$~\cite{generalizedpaxos} &
       \begin{tabular}{l}
                 $\ph1$ = ballot,\\
                 $\rd1$ = \{NewBallot, AckBallot,  NewLog\}\\
                 $\ph2$ =  op\_number,\\
                  $\rd2$ = \{Prepare,
                  PrepareOK, Commit\}
                  \end{tabular} &
        1646  &
        621 & 405
         \\ \hline
    \hline
  \end{tabular}
\end{center}
}
\caption{Communication-closed asynchronous protocols. The superscript * identifies protocols that jump over phases. The superscript V marks the protocols whose synchronous counter-part we verified. }
\label{fig:examples}
\end{figure}

\paragraph{Rewriting.} While checking the
verification tags can be done for any annotated asynchronous protocol,
     the rewriting tool checks whether the asynchronous protocol is in
     a specific form and only then translates it into \newmodel.
While in theory this is a restriction, the benchmarks in
     Fig.~\ref{fig:examples} show that well-known algorithms are rewritten
     by our tool.
For instance, the algorithm~\cite[Fig.~6]{ChandraT96} solves consensus
     using an eventually strong failure detector.
The algorithm jumps over rounds in a specific way.
If a special decision message is received, a process jumps forward to a
     decision round and outputs the decision value.
The resulting algorithm is much like Last Voting in
     \cite[Fig.~5]{Charron-BostS09}.
ViewChange is a leader election algorithm similar to the one in
     ViewStamped: unlike in the running example (unlike Paxos), in
     ViewChange processes first agree to change the current leaders,
     and than on a leader.
The phase number is the view number (like in Paxos), that is, two
     processes either agree on the identity of a leader in a view or
     they know of no leader.
     Normal-Op is the sub-protocol used in ViewStamped to implement the
     broadcasting of new commands by a stable leader.
Multi-Paxos is described in Sec~\ref{sec:rbmodel}. It is Paxos
     from~\cite{generalizedpaxos} over sequences, without fast paths,
     where the classic path is repeated as long as the leader is stable.
     In Paxos parlance, the tags for  leader election (outer protocol) are
     {\em Phase1a}, {\em Phase1b}, {\em Phase1aStart} (in this order).
     The rounds of the sub-protocol are called  {\em Phase2aClassic}, {\em Phase2bClassic}, {\em learn}.
     We considers that acceptors and leaders play also the role of learners.

Our tool has rewritten the protocols from Fig.~\ref{fig:examples}.
The implementation uses pycparser~\cite{pycparser}, a parser for the C
     language written in pure Python, to obtain the abstract syntax
     tree of the input protocol.
The last two columns of Fig.~\ref{fig:examples}  give the size in LoC
     of the asynchronous protocol without annotations and the size of
     its synchronous counterpart computed by the rewriting procedure
     from Sec.~\ref{sec:rewriting}.

%The implementation of the rewriting algorithm uses pycparser~\ref{pycparser}, a parser for the C language written in pure Python, to obtain the abstract syntax tree of the input protocol. The code is striped of comments, parsed, and rewrite at the abstract syntax tree level~\footnote{pycparser does not know bool as a type, so in our benchmarks for the running example  booleans is replaced by integers in a consistent way.}.
%Columns four and five, give the size of the input file stripped of comments, i.e., proof annotations, and the size of the \newmodel\ protocol. There are several optimization to decrease the size of the \newmodel\ protocol that we have not implement yet. For example, \cd{ Talk about the real output synchronous code of the running example and the breaching of the leader/follwoer. }
%\cd{Talk about Algorithms that have the same HO but different async.}

\paragraph{Verification.}
We have verified the safety specification (agreement) of the
     \newmodel\ counter-parts of the running example
     (Figure~\ref{fig:async-leader-election}), Normal-Op,  and
     Multi-Paxos, by deductive verification using the Consensus Logic
     (CL for short) defined in~\cite{vmcai}.
To this, we encoded the specification and the transition relation in
     CL, and used CL's semi-decision procedure for
     satisfiability~\cite{psynctool} to discard the verification
     conditions.
For Multi-Paxos we did a modular proof.
First we prove the correctness of the sub-protocols (executed in case
     of a stable leader).
Its specification is that the logs of all processes that execute the
     sub-protocol are equal at the beginning and at the end of each
     phase (after an iteration of \texttt{Prepare},
     \texttt{PrepareOk}, \texttt{Commit}),  knowing that processes
     start the sub-protocol with equal logs.
Moreover, the sub-protocol preserves the invariant property that a
     majority of processes have the same prefix, consisting of  all
     the committed commands.
Then we prove the leader election outer loop correct.
Its specification states that there is at most one leader in a ballot
     (like in~(\ref{propintro})) and that a majority of processes have
     the same prefix, consisting of  all the committed commands.
The leader picks the longest log of its followers.
The fact that all committed values are logged by a majority of
     processes ensures that the new log proposed by the leader will
     not have lost any committed commands.
However, there are no guarantees for the uncommitted commands.

\section{Related work}
\label{sec:relred}

\newcommand{\abcd}{\cite{DBLP:journals/siamcomp/MosesR02,ChouG88,EngelhardtM05a}}

\newcommand{\roundpaps}{\cite{DLS88:jacm,Lyn96:book,Charron-BostS09,Gafni98,SW89:stacs}}

\newcommand{\syncverif}{\cite{DBLP:conf/srds/TsuchiyaS07,DBLP:conf/wdag/TsuchiyaS08,Charron-BostDM11,DBLP:journals/afp/DebratM12,vmcai,GleissenthallBR16,MaricSB17,AminofRSWZ18}}

%% The literature on distributed algorithms is notorious for the huge
%%      variety of computational models.
%% One reason for this is that the simplest ones, are either unrealistic
%%      (the synchronous) or entail disturbing impossibility results (the
%%      asynchronous).
%% Between these two extremes, a central question in distributed
%%      computing theory is: what is the right way to talk about
%%      distributed systems?
  %% One research line in this quest is to link
  %%    synchronous or round-based models to asynchronous models via the
  %%    notion of communication closure~\cite{EF82}.

  Our goal is to link
     synchronous or round-based models to asynchronous models via the
     notion of communication closure~\cite{EF82}.
Exploiting this for better design and simpler paper-and-pencil proofs
     was considered, e.g., in~\abcd.
Several round-based computational models are based on this
     idea~\roundpaps.
Commonly the underlying idea is to design an algorithm for
     the round-based setting and deduce results for the asynchronous.
A method that takes round-based code as input and generates
     asynchronous code was given in~\cite{DragoiHZ16}.
However, for efficiency reasons, designers often prefer to work with
     asynchronous code.
Therefore, in this paper we start from asynchronous protocols, and
     compute the round-based canonic form as design artifact and for
     verification purposes.
From the canonic form one can choose one of the existing automated
     verification methods for round-based distributed
     algorithms~\syncverif.
     
There are several other frameworks for the verification of
     asynchronous distributed algorithms, e.g.,
     Verdi~\cite{DBLP:conf/pldi/WilcoxWPTWEA15},
     IronFleet~\cite{HawblitzelHKLPR17}, ByMC~\cite{KLVW17:POPL},
     Ivy~\cite{PadonMPSS16}, and Disel~\cite{SergeyWT18}.
Very interesting distributed algorithms have been verified in these
     frameworks.
Still, they require considerable expertise either in manually fitting
     asynchronous code to the fragment that can be dealt with by the method,
     or in guiding interactive theorem provers.
Typically, these works also consider verification of specific
     algorithms which makes it hard to generalize ideas.

Our research belongs to an effort to develop techniques for automated
     reduction to synchronized executions.
Three concurrent approaches in this quest are the exciting results
     in~\cite{BouajjaniEJQ18},~\cite{KraglQH18}
     and~\cite{DBLP:journals/pacmpl/BakstGKJ17,GGB19}.
Compared to their work, our approach is less guided by specific
     communication patterns of existing systems.
Rather we put communication closure in the center of our
     considerations.
Hence, we are more permissive to different communication structures.
For instance, the recent paper~\cite{GGB19} does not allow skipping
     rounds, with the side effect that they cannot model that a
     process remains leader for several consecutive iterations, which
     is an important efficiency mechanism in systems that implement
     ideas from Paxos~\cite{generalizedpaxos} and Viewstamped
     Replication~\cite{OkiL88}.
The notion of k-synchronizability in~\cite{BouajjaniEJQ18} is
     restricted to   FIFO communication channels.
In contrast our method does not make any assumptions  about the
     communication model between processes (works for UDP, TCP/IP).
Moreover, unbounded jumps over phases cannot be captured  by
     k-synchronizability.
The method in~\cite{KraglQH18} adapts verification methods for remote
     procedure calls to leader/follower communication.
As a result they do not support rounds with all-to-all communication
     or that a leader plays also the role of a follower; both is the
     case in our running example.

\section{Conclusion and future work}
\label{sec:conclusion}

We formalized the notion of communication closure of asynchronous protocols and showed that  several
     challenging benchmarks satisfy this property. 
We showed that communication closure captures formally the intuition of protocol designers and is an enabler for a synchronous
     canonic form of asynchronous protocols. 
This canonic form enables the use of different verification
     techniques, and we verified the several benchmarks using the
     Consensus Logic framework~\cite{vmcai}.

%The experiments confirm that verification of synchronous round-based
%     protocols is  simpler than the verification of asynchronous
%     protocols because of the simpler proof arguments discussed in the
%     ``Benefits'' paragraph in Section~\ref{sec:glance}.

We consider the verification of synchronous round-based
     protocols an orthogonal problem, however progress in
     this research area has direct impact on the verification of
     asynchronous protocols that are communication-closed. 
     Roughly the main difficulty regarding automating reasoning of synchronous systems comes from the data 
     they manipulate and not from their control structure. %The synchronous systems 
    % They inherit all the difficulties 
     %of reasoning about data structures.    
%Invariants might talk about set comprehensions, cardinality
%     constraints, quantifier alternation, and function symbols over
%     unbounded domains.
%Consensus Logic comes with a powerful semi-decision procedure that
%     captures all these features.
%Still, it may fail to prove correctness due its lack of completeness
%     or because the size of the formula sent to an SMT-solver are too
%     big to be practical due to universal quantifier instantiation.

Our methods preserves relevant safety and liveness properties.
Reasoning about liveness in \newmodel\ requires assumptions
     about the $HO$ sets, which can be done in Consensus Logic.
However, besides initial theoretical results~\cite{SHQ18}, the  connection
     between $HO$ sets and the asynchronous world is formally not well
     understood, yet.
Thus, there is no automated method that translates asynchronous
     receptions loops with time-outs, etc.\ into $HO$ sets.
This would be required for total correctness regarding liveness  and
     is subject to future work.

%% Finally, our notion of communication closure is based on an abstract
%%      notion of time that imposes a total order on tags, that is, time
%%      values. For distributed systems we find it quite natural that
%%      algorithms may implement a partially ordered notion of
%%      time. While we are not aware of an example, it is worthwhile to
%%      study more distributed algorithms and develop a corresponding theory.

%
% ---- Bibliography ----
%
% BibTeX users should specify bibliography style 'splncs04'.
% References will then be sorted and formatted in the correct style.
%

\bibliographystyle{splncs04}
\bibliography{biblio}
%

%% \clearpage
%% \appendix

%% \section*{APPENDIX}

%% \input{CompHO}

\end{document}